\newtheorem{theorem}{Theorem}[section]
\newtheorem{lemma}[theorem]{Lemma}
\newtheorem{obs}[theorem]{Observation}
\newtheorem{prop}[theorem]{Proposition}
\long\def\onefigure#1#2{
\begin{figure*}[tbp]
\begin{center}
#1
\end{center}
\caption{#2}
\end{figure*}
}
\newcommand{\labepsfig}[2]  
{\onefigure{\mbox{\includegraphics{#1.eps}}}{\label{f:#1}#2}}
\newcommand{\eps}{\varepsilon}
\newcommand{\tuple}{\mathbf}
\newcommand{\DD}{\tuple D}
\newcommand{\PP}{\tuple P}
\newcommand{\RR}{\tuple R}
\renewcommand{\SS}{\tuple S}
\newcommand{\VV}{\tuple V}
\DeclareMathOperator{\conv}{conv}
\newcommand{\Nset}{\mathbb N}
\newcommand{\Rset}{\mathbb R}
\newcommand{\R}{\Rset}
\DeclareMathOperator{\Dom}{\mathbf{Dom}}
\DeclareMathOperator{\dom}{dom}
\newcommand{\closure}{\overline}
\newcommand{\bd}{\partial}
\newcommand{\dist}{\mathrm{dist}}
\newcommand{\cBall}{\mathrm B}
\newcommand{\support}{\top}
\begin{document}


\begin{center}
\leavevmode\bigskip\bigskip

\textbf{ZONE DIAGRAMS IN EUCLIDEAN SPACES\\AND IN OTHER NORMED SPACES}

\bigskip
\bigskip
\bigskip

\textsc{Akitoshi Kawamura}%
\footnote{%
Part of this work was done while A.K. was visiting 
ETH Z\"urich, 
whose support and hospitality are gratefully acknowledged. 
His research is also supported by
the Nakajima Foundation and 
the Natural Sciences and Engineering Research Council of Canada.
}%
\\[2pt]
\begin{footnotesize}
Department of Computer Science, 
University of Toronto \\
10 King's College Road, 
Toronto, Ontario, M5S\,3G4 Canada\\
\texttt{kawamura@cs.toronto.edu}\par
\end{footnotesize}

\bigskip

\textsc{Ji\v{r}\'{\i} Matou\v{s}ek}%
\\[2pt]
\begin{footnotesize}
Department of Applied Mathematics and\\
Institute of Theoretical Computer Science (ITI), 
Charles University\\
Malostransk\'{e} n\'{a}m.~25, 
118\,00~~Praha~1, Czech Republic, and
\\
Institute of Theoretical Computer Science, ETH Z\"urich\\
8092 Z\"urich, Switzerland\\
\texttt{matousek@kam.mff.cuni.cz}\par
\end{footnotesize}

\bigskip

\textsc{Takeshi Tokuyama}%
\footnote{%
The part of this research by T.T. was partially
supported by the 
JSPS Grant-in-Aid for Scientific Research (B) 18300001. 
}%
\\
\begin{footnotesize}
Graduate School of Information Sciences, Tohoku University\\
Aramaki Aza Aoba, Aoba-ku, Sendai, 
980-8579 Japan\\
\texttt{tokuyama@dais.is.tohoku.ac.jp}\par
\end{footnotesize}
\end{center}

\bigskip
\bigskip

\begin{small}\leftskip21pt\rightskip\leftskip
\noindent
\textsc{Abstract. }
Zone diagram is a variation on the classical concept of
a Voronoi diagram. Given $n$ sites in a metric space that compete
for territory, the zone diagram is an equilibrium state 
in the competition. Formally it is defined as a fixed point of a certain
``dominance'' map. 

Asano, Matou\v{s}ek, and Tokuyama proved
the existence and uniqueness of a zone diagram for point
sites in Euclidean plane, and Reem and Reich showed existence
for two arbitrary sites in an arbitrary metric space.
We establish existence and uniqueness for $n$ disjoint compact
sites in a Euclidean space of arbitrary (finite) dimension,
and more generally, in a finite-dimensional normed space
with a smooth and rotund norm. The proof is considerably simpler
than that of Asano et al. We also provide an example of non-uniqueness
for a norm that is rotund but not smooth. Finally,
we prove existence and uniqueness for two point sites in the plane
with a smooth (but not necessarily rotund) norm.
\par
\end{small}

\thispagestyle{empty}


\clearpage
\setcounter{page}{1}
\markboth{ZONE DIAGRAMS IN NORMED SPACES}{KAWAMURA, MATOU\v SEK, TOKUYAMA}

\section{Introduction}

Zone diagram is a metric notion somewhat similar to the classical
concept of a Voronoi diagram. Let $(X,\dist)$ be a metric space
and let $\PP=(P_1,\ldots,P_n)$ be an $n$-tuple of
nonempty subsets of $X$ called the \emph{sites}. To avoid unpleasant
trivialities, we will always assume in this paper that the sites 
are closed and pairwise disjoint.

A \emph{zone diagram}
of the $n$-tuple $\PP$ is an $n$-tuple $\RR=(R_1,\ldots,R_n)$
of subsets of $X$, called the \emph{regions} of the zone diagram,
with the following defining property: Each $R_i$ 
consists of all points $x\in X$ that are closer (non-strictly)
to $P_i$ than to the union $\bigcup_{j\ne i}R_j$ of all
the other regions. 

Fig.~\ref{f:example-zod} shows a zone diagram
in Euclidean plane whose sites are points and segments.
While in the Voronoi diagram the regions partition the whole space,
in a zone diagram the union of the regions typically has a nonempty
complement, called the \emph{neutral zone}. 

\labepsfig{example-zod}{A zone diagram of points and segments.}

The definition of the zone diagram is  implicit, since each region is 
determined  in terms of the remaining ones. So neither existence 
nor uniqueness of the zone diagram is obvious, and so far only
partial results in this direction have been known.

Asano et al.~\cite{asano07:_zone_diagr} introduced the notion
of a zone diagram, for the case of $n$ point sites in Euclidean plane,
and in this setting they proved existence and uniqueness.
The proof involves a case analysis specific to $\Rset^2$.

Reem and Reich \cite{reem07:_zone_and_doubl_zone_diagr}
established, by a simple and elegant argument,
the existence of a zone diagram for \emph{two} sites
in an arbitrary metric space (and even in a still more general setting,
which they call \emph{$m$-spaces}).

On the negative side, they gave an example of a three-point metric space
in which the zone diagram of two point sites is not unique; 
thus, \emph{uniqueness} needs  additional assumptions.
On the other hand, for all we know, 
it is possible that a zone diagram always exists, 
for arbitrary sites in an arbitrary metric space.

\subsection*{Arbitrary sites in Euclidean spaces. } 
In this paper, 
we establish the existence and uniqueness of zone diagrams
in Euclidean spaces. 
This generalizes the main result of \cite{asano07:_zone_diagr}
with a considerably simpler argument. For the case of two
point sites in the plane, we also obtain a new and simpler proof of
the existence and uniqueness of the \emph{distance trisector
curve} considered by Asano et al.~\cite{asano07:_distan_trisec_curve}.

\begin{theorem} \label{t:euclcase}
Let the considered metric space $(X,\dist)$ 
be $\R^d$ with the Euclidean distance.
For every $n$-tuple $\PP=(P_1,\ldots,P_n)$ of nonempty closed sites in $\R^d$
such that $\dist(P_i,P_j)>0$ for every $i\ne j$, there exists
exactly one zone diagram~$\RR$.
\end{theorem}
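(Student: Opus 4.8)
\medskip
\noindent\textbf{Plan of the proof.}
The plan is to split the statement into a soft, lattice-theoretic part that produces a smallest and a largest ``candidate solution'', and a hard, geometric part --- the only place where the Euclidean (or smooth rotund) structure is really used --- showing that the two candidates coincide.

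For closed $P,A\subseteq\R^d$ write $\dom(P,A)=\{x\in\R^d:\dist(x,P)\le\dist(x,A)\}$; this set is closed and depends on $A$ only through $\closure A$, and a zone diagram of $\PP$ is exactly a fixed point of the \emph{dominance map} $D$, defined on $n$-tuples of closed subsets of $\R^d$ by $D(\RR)_i=\dom\bigl(P_i,\bigcup_{j\ne i}R_j\bigr)$. Ordering such $n$-tuples componentwise by inclusion gives a complete lattice (meet $=$ intersection, join $=$ closure of union), and $D$ is order-reversing, so $D\circ D$ is order-preserving. By the Knaster--Tarski theorem $D\circ D$ has a least fixed point $\RR_{\min}$ and a greatest one $\RR_{\max}$, with $\RR_{\min}\le\RR_{\max}$; and, $D$ being order-reversing and permuting the fixed points of $D\circ D$, it interchanges these extremes: $D(\RR_{\min})=\RR_{\max}$ and $D(\RR_{\max})=\RR_{\min}$. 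Any zone diagram is a fixed point of $D\circ D$, hence lies between $\RR_{\min}$ and $\RR_{\max}$. So the theorem reduces to
\[
\RR_{\min}=\RR_{\max},
\]
for then this common tuple is the unique fixed point of $D$. I would also record the elementary facts that $P_i\subseteq R_{\min,i}\subseteq R_{\max,i}$, that $\dist(P_i,R_{\max,j})\ge\tfrac12\dist(P_i,P_j)>0$ for $i\ne j$ (so each $R_{\max,i}$ contains a fixed neighbourhood of $P_i$), that $\dom(p,A)$ is star-shaped about $p$ when $p$ is a single point (triangle inequality), and that $\dom(P_i,A)=\bigcup_{p\in P_i}\dom(p,A)$ for closed $P_i$.

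Assume for contradiction that $\RR_{\min}\ne\RR_{\max}$, say $x\in R_{\max,i}\setminus R_{\min,i}$. Two observations bootstrap this into an infinite descent. First, such an $x$ lies in the neutral zone of $\RR_{\min}$, i.e.\ $x\notin R_{\min,l}$ for every $l$: if $x\in R_{\min,l}$ for some $l\ne i$ then, since $R_{\max,i}$ is one of the sets competing to define $R_{\min,l}$, one gets $x\in P_l$, and then, since $R_{\min,l}$ is one of those competing to define $R_{\max,i}$, one gets $x\in P_i$, contradicting $P_i\cap P_l=\emptyset$. Second, since $x\notin R_{\min,i}=\dom\bigl(P_i,\bigcup_{l\ne i}R_{\max,l}\bigr)$, some nearest point $y$ of $\bigcup_{l\ne i}R_{\max,l}$ to $x$ lies in a set $R_{\max,k}$ with $k\ne i$, and $|x-y|=\dist\bigl(x,\bigcup_{l\ne i}R_{\max,l}\bigr)<\dist(x,P_i)$; since $x\in R_{\max,i}=\dom\bigl(P_i,\bigcup_{j\ne i}R_{\min,j}\bigr)$ we also have $\dist(x,R_{\min,k})\ge\dist(x,P_i)>|x-y|$, so $y\notin R_{\min,k}$, i.e.\ $y\in R_{\max,k}\setminus R_{\min,k}$ again. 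Iterating yields a sequence $z_0=x,z_1=y,z_2,\dots$ with $z_m\in R_{\max,i_m}\setminus R_{\min,i_m}$, $i_{m+1}\ne i_m$, and $z_{m+1}$ a nearest point of $\bigcup_{l\ne i_m}R_{\max,l}$ to $z_m$. Since $z_m\in R_{\max,i_m}\subseteq\bigcup_{l\ne i_{m+1}}R_{\max,l}$, the gaps $\delta_m:=|z_m-z_{m+1}|$ are non-increasing, so $\delta_m\to\delta$ for some $\delta\ge0$.

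Reaching a contradiction from this descent is the geometric heart of the argument and, I expect, the main obstacle. In the principal case one passes to a subsequence along which $i_m,i_{m+1}$ are constant, say $a\ne b$, and the points converge: $z_m\to z^{\ast}\in R_{\max,a}$, $z_{m+1}\to z^{\ast\ast}\in R_{\max,b}$. Then $|z^{\ast}-z^{\ast\ast}|=\delta=\dist\bigl(z^{\ast},\bigcup_{l\ne a}R_{\max,l}\bigr)=\dist\bigl(z^{\ast\ast},\bigcup_{l\ne b}R_{\max,l}\bigr)$, so $z^{\ast}$ and $z^{\ast\ast}$ are \emph{mutually} nearest points between the two closed families. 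Moreover $\dist(z^{\ast},P_a)\ge\delta$ keeps $P_a$ out of the open $\delta$-ball about $z^{\ast}$; and, using $z^{\ast}\in\bigcup_{p\in P_a}\dom\bigl(p,\bigcup_{l\ne a}R_{\min,l}\bigr)$ and star-shapedness, the segment from $z^{\ast}$ to a suitable point of $P_a$ stays inside $R_{\max,a}$ and hence leaves the (strictly convex, by rotundity) $\delta$-ball about $z^{\ast\ast}$, meeting its sphere only at $z^{\ast}$; symmetrically with $a$ and $b$ interchanged. Smoothness of the norm now provides a unique supporting hyperplane to each of the two $\delta$-balls at $z^{\ast}$ and at $z^{\ast\ast}$, and the task is to combine these with the chords running out to the sites so as to show that this configuration of two touching balls cannot exist. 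This is where essentially all the work sits, precisely because the regions are not convex, so one cannot separate them from the balls by hyperplanes directly but must instead exploit that each region is a union of sets star-shaped about site points lying outside the relevant balls; it is also exactly here that smoothness is indispensable, matching the non-uniqueness example for a rotund non-smooth norm promised in the abstract. The leftover cases are handled separately by easier variants: if $\delta=0$ the limit point lies in two of the regions $R_{\max,\cdot}$ at once and turns out to be equidistant from the two corresponding sites, and if the $z_m$ escape to infinity one uses that a point far from all sites must be closer to some competing region --- which contains a fixed neighbourhood of its own site --- than to its own site.
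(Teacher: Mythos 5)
The proposal correctly reproduces the paper's reduction: use Knaster--Tarski (the second proof of Theorem~\ref{t:doublezone}) to obtain a least and a greatest fixed point $\RR_{\min}\preceq\RR_{\max}$ of $\Dom^2$, note that $\Dom$ interchanges them, and observe that uniqueness follows once $\RR_{\min}=\RR_{\max}$. The iteration producing $z_0,z_1,\ldots$ in $R_{\max,i_m}\setminus R_{\min,i_m}$ with non-increasing gaps $\delta_m=\lVert z_m-z_{m+1}\rVert<\dist(z_m,P_{i_m})$ is also set up correctly. From that point on, however, the argument departs from the paper and, more importantly, is not actually completed.

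The decisive gap is in the ``geometric heart.'' After passing to a subsequential limit configuration ($z^*,z^{**}$ mutually nearest at distance $\delta$, with segments to site points avoiding the opposite open $\delta$-ball), you write that ``the task is to combine these with the chords running out to the sites so as to show that this configuration of two touching balls cannot exist. This is where essentially all the work sits,'' and you stop there. That is precisely the step that would carry the whole proof, and no concrete argument is given. Moreover, the only genuinely geometric ingredient you invoke up to that point --- that $\dom(p,A)$ is star-shaped about the single point $p$ --- holds in \emph{every} metric space, including $\ell_1$ where the conclusion is false. So star-shapedness plus the two tangent hyperplanes cannot on their own force a contradiction; something strictly stronger is needed. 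The paper supplies it in the Cone Lemma (Lemma~\ref{l:eucl-cone}): in the Euclidean case $\dom(p,A)$ is an intersection of halfspaces and hence \emph{convex}, so $R_i$ contains the fat cone $\conv(\{a\}\cup\cBall(p,\eps/4))$, and Lemma~\ref{l:step} exploits the positive opening angle of that cone quantitatively. Your proposal never establishes an analogue of this, and the tangent-hyperplane picture you describe does not by itself rule out the ``two touching balls'' configuration (for instance, nothing you have stated prevents the two outgoing segments from lying along the line $z^*z^{**}$, pointing away from each other). The side cases ($\delta=0$, escape to infinity) are also only gestured at and would need real arguments.

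It is also worth noting that, beyond the incompleteness, the overall strategy differs from the paper's: the paper avoids limits entirely, running a purely discrete descent on $s(b_t)=\dist(b_t,P)$, bounded below by $\eps/4$ via Observation~\ref{o:eps4ball}, with a carefully quantified trichotomy in Lemma~\ref{l:step}. Your version tracks the gap $\delta_m$ and passes to a compactness limit, which forces you to handle the unboundedness case separately and to extract a contradiction from a static limiting configuration --- a route that may well be workable, but would require you to find and prove a substitute for the Cone Lemma at the limit point, and at present that key lemma is missing.
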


The full proof is contained in Sections~\ref{s:prelim} (general
preliminaries) and~\ref{s:euclcase}. The same proof yields
existence and uniqueness also for infinitely many sites in $\R^d$,
provided that every two of them have distance at least $1$
(or some fixed $\eps>0$). Moreover, with some extra effort
it may be possible to extend  the proof to compact sites in a Hilbert
space, for example, but in this paper we restrict ourselves to
the finite-dimensional setting.

\subsection*{Normed spaces. } We also investigate zone diagrams in
a more general class of metric spaces, namely, finite-dimensional
normed spaces.\footnote{A finite-dimensional (real) normed space
can be thought of as the real vector space $\R^d$ with some \emph{norm},
which is a mapping  that assigns a nonnegative real number $\|x\|$
 to each $x\in\R^d$ so that
$\|x\|=0$ implies $x=0$, $\|\alpha x\|=|\alpha|\cdot\|x\|$
 for all $\alpha\in\R$,
and the triangle inequality holds: $\|x+y\|\leq \|x\|+\|y\|$.
Each norm $\| \mathord\cdot \|$ defines a metric by
$\dist(x,y):=\|x-y\|$.

For studying a norm $\| \mathord\cdot \|$, it is usually good to look
at its \emph{unit ball} $\{\, x\in\R^d: \|x\|\le 1 \,\}$.
The unit ball of any norm is a closed convex body $K$ that is
symmetric about $0$ and contains $0$ in the interior.
Conversely, any $K\subset\R^d$ with the listed properties is the unit ball
of a (uniquely determined) norm.
}
Normed spaces are among the most important classes of metric spaces.
Moreover, as we will see, studying arbitrary norms also sheds some
light on the Euclidean case. Earlier Asano and Kirkpatrick
\cite{AsanoKirkpatrick} investigated distance trisector
curves (which are essentially equivalent
to two-site zone diagrams)
of two point sites under polygonal norms in the plane,
obtaining results for the Euclidean case through approximation
arguments.

For us, a crucial observation is that the uniqueness of zone diagrams
does \emph{not} hold for normed spaces. Let us consider
$\R^2$ with the $\ell_1$ norm $\| \mathord\cdot \|_1$, given by
$\|x\|_1=|x_1|+|x_2|$. It is easy to check that 
the two point sites $(0,0)$ and $(0,3)$ have at least two 
different zone diagrams, as drawn in Fig.~\ref{f:l1nonunique}.
This example was essentially contained already in Asano and Kirkpatrick
\cite{AsanoKirkpatrick}, although in a different context.

\labepsfig{l1nonunique}{Two different zone diagrams 
under the $\ell_1$ metric (drawn in the grid with unit spacing). }

The $\ell_1$ norm differs from the Euclidean norm in two basic
respects: the unit ball has sharp corners and
straight edges; in other words, the $\ell_1$ norm is neither 
smooth nor rotund. We recall that
a norm $\| \mathord\cdot \|$ on $\R^d$ is called \emph{smooth} if the function
$x\mapsto \|x\|$ is differentiable (geometrically,
the unit ball of a smooth norm has no ``sharp corners'';
see Fig.~\ref{f:norms}).\footnote{There
are several notions of differentiability of functions
on Banach spaces, such as the existence of directional
derivatives, G\^ateaux differentiability,
Fr\'echet differentiability, or
uniform Fr\'echet differentiability. However,
in finite-dimensional Banach spaces they
are all equivalent.} 
A  norm $\| \mathord\cdot \|$ on $\R ^d$ is called \emph{rotund}
(or \emph{strictly convex}) if
for all $x,y\in\R^d$ with $\|x\|=\|y\|=1$
and $x\ne y$ we have
$\|\frac{x+y}2\|<1$. Geometrically, the unit sphere
of $\| \mathord\cdot \|$ contains no segment.
By compactness, a rotund norm on a finite-dimensional space is also \emph{uniformly convex}, which means
that for every $\eps>0$ there is $\mu=\mu(\eps)>0$ such that if $x,y$ are 
unit vectors with $\|x-y\|\ge\eps$, then
\begin{equation*}\label{e:unifconv}
\left\|\frac{x+y}2 \right\|\le1-\mu 
\end{equation*}
(we refer to \cite{BenyaminiLindenstrauss}
for this and other facts on norms mentioned without proofs).

\labepsfig{norms}{Rotundity and smoothness of norms.}

The Euclidean norm $\| \mathord\cdot \|_2$, and more generally, the
$\ell_p$ norms with $1<p<\infty$, are both rotund and smooth.
We have the following generalization of Theorem~\ref{t:euclcase}:

\begin{theorem} \label{t:norms}
Let the considered metric space $(X,\dist)$
be $\R^d$ with  a norm $\| \mathord\cdot \|$ that is both smooth and rotund.
For every $n$-tuple $\PP=(P_1,\ldots,P_n)$ of nonempty closed sites in $\R^d$
such that $\dist(P_i,P_j)>0$ for every $i\ne j$, there exists
exactly one zone diagram~$\RR$.
\end{theorem}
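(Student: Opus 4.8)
The plan is to split the proof, as in the Euclidean case of Sections~\ref{s:prelim}--\ref{s:euclcase}, into a soft order-theoretic reduction and a geometric core in which the smoothness and rotundity of $\| \mathord\cdot \|$ are the only properties of the norm that get used.

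\emph{The reduction.} For closed $A,C\subseteq\R^d$ put $\dom(A,C)=\{\,x:\dist(x,A)\le\dist(x,C)\,\}$, and let the \emph{dominance map} $\Dom$ send an $n$-tuple $\RR=(R_1,\dots,R_n)$ to the tuple with $\Dom(\RR)_i=\dom(P_i,\bigcup_{j\ne i}R_j)$; a zone diagram is precisely a fixed point of $\Dom$. Enlarging the rival regions can only shrink a dominance region, so $\Dom$ is antitone for componentwise inclusion and $\Dom^2$ is monotone on the complete lattice of $n$-tuples of closed subsets of $\R^d$. By the Knaster--Tarski theorem $\Dom^2$ has a least fixed point $\RR^-$ and a greatest fixed point $\RR^+$, and a standard short argument then gives $\Dom(\RR^+)=\RR^-$ and $\Dom(\RR^-)=\RR^+$ (a \emph{reciprocal pair}), $P_i\subseteq R_i^-\subseteq R_i^+$ with $R_i^{\pm}$ closed for every $i$, and $R_i^-\subseteq R_i\subseteq R_i^+$ for every zone diagram $\RR$. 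Hence it suffices to show $\RR^-=\RR^+$: the common value is then the unique zone diagram. (Informally, $\RR^+$ and $\RR^-$ are the upper and lower limits of the $\Dom$-iteration started at the whole space, and the point is that the iteration converges.)

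\emph{The geometry of the reciprocal pair.} Several elementary facts constrain where $\RR^-$ and $\RR^+$ can differ. If $x\in R_i^+$ and $p\in P_i$ realizes $\dist(x,P_i)$, then the whole segment $[x,p]$ lies in $R_i^+$, since along it the distance to $P_i$ decreases at least as fast as the distance to any set can; likewise for $R_i^-$. Each $P_i$ is disjoint from every $R_j^{\pm}$ with $j\ne i$, so (for compact sites, the case of main interest) $R_i^-$ and $R_i^+$ agree in a neighbourhood of $P_i$ and the disagreement set $R_i^+\setminus R_i^-$ is bounded away from $P_i$. And since $R_i^-=\dom(P_i,\bigcup_{j\ne i}R_j^+)$ is a sublevel set of a continuous function, at every point $y$ of its topological boundary one has $\dist(y,P_i)=\dist(y,\bigcup_{j\ne i}R_j^+)$. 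Now assume $\RR^-\ne\RR^+$ and pick $x_0\in R_{i_0}^+\setminus R_{i_0}^-$. Since $x_0\notin R_{i_0}^-$, there is $i_1\ne i_0$ with $\dist(x_0,R_{i_1}^+)<\dist(x_0,P_{i_0})\le\dist(x_0,R_{i_1}^-)$; taking $x_1$ to be the nearest point of $R_{i_1}^+$ to $x_0$ (unique, by rotundity) gives $\dist(x_0,x_1)<\dist(x_0,P_{i_0})$ and, from the last inequality, $x_1\in R_{i_1}^+\setminus R_{i_1}^-$ once more. Iterating produces an infinite \emph{descent} $x_0,x_1,x_2,\dots$ with $x_m\in R_{i_m}^+\setminus R_{i_m}^-$, $i_{m+1}\ne i_m$, $\dist(x_m,x_{m+1})<\dist(x_m,P_{i_m})$, and each $x_{m+1}$ a nearest point of a rival region.

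\emph{The main obstacle} is to turn this descent into a contradiction --- equivalently, to prove the iteration really converges --- and this is exactly where both hypotheses are essential, the $\ell_1$ example of Fig.~\ref{f:l1nonunique} (which is neither smooth nor rotund) showing that it must be. The plan is to exhibit a quantitative contraction: a potential built from the distances $\dist(x_m,P_{i_m})$ together with the ``directions'' of the segments $[x_m,x_{m+1}]$ and $[x_m,p_m]$ (with $p_m\in P_{i_m}$ nearest to $x_m$) that decreases strictly along the descent. Rotundity --- and, by compactness, uniform convexity --- makes nearest points unique and makes the triangle inequality strict away from geodesics with a quantitative modulus; smoothness gives the unit sphere a unique tangent hyperplane at each point, so the supporting halfspace of a metric ball through a nearest point is well defined and varies continuously. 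Together these should force the strict decrease and exclude the degenerate ``frozen'' descents (short periodic ones, which one checks would force a non-site point to lie simultaneously in two regions $R_a^+$ and $R_b^+$, hence on the bisector of $P_a$ and $P_b$, whereupon the boundary-equality fact above yields a contradiction). Finally, since only finitely many sites occur and the descent keeps the $x_m$ in a bounded region, a compactness / extremal argument on $(x_m)$ closes it, giving $\RR^-=\RR^+$ and hence the theorem.
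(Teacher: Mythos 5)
Your reduction is correct and matches the paper: by Knaster--Tarski, $\Dom^2$ has a least and greatest fixed point forming a reciprocal pair $(\RR^-,\RR^+)$, any zone diagram is sandwiched between them, and it suffices to show $\RR^-=\RR^+$ (this is the paper's Theorem~\ref{t:doublezone}, proved in Appendix~\ref{appendix section: double zone diagrams}). Your observation that the segment $[x,p]$ stays in $R_i^+$ is also a correct, if weak, version of the paper's cone lemma.

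The difficulty is that everything after that is a sketch of intentions rather than a proof, and it is precisely the part where the theorem's hypotheses earn their keep. Two concrete gaps. First, your descent is defined differently from the paper's and is not obviously monotone in the right quantity: you take $x_{m+1}$ to be the \emph{nearest} point of $R_{i_{m+1}}^+$ to $x_m$, which gives $\dist(x_m,x_{m+1})<\dist(x_m,P_{i_m})$, but this does not control $\dist(x_{m+1},P_{i_{m+1}})$. The paper instead drops from $b$ to the boundary point $a\in\bd R_i^-$ on the segment $bp$ and picks $b'\in R_j^+$ with $\dist(a,b')=\dist(a,p)$; that specific choice is what makes $s(b')\le \dist(b',a)=s(b)-\delta(b)$ hold and the distances genuinely nonincreasing. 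Second, and more seriously, the heart of the paper's argument --- the quantitative cone lemma (Lemma~\ref{lemma:cone}, resting on Lemma~\ref{l:convcompl} and uniform convexity to get a cone $\conv(\{a\}\cup\cBall(p,\rho))\subseteq R_i^-$ with an opening angle bounded below in terms of $\eps,s_0$ and the norm), plus the step lemma's dichotomy where smoothness guarantees that one of the two rays $k,\ell$ cuts a segment of uniformly significant length from $\cBall(b',r)$ --- is entirely missing. In its place you posit ``a potential built from the distances and directions'' that ``should force the strict decrease'' and a classification of ``frozen'' descents, but none of this is defined or established, and the claim that a periodic descent would force a non-site point to lie in two regions $R_a^+\cap R_b^+$ is neither proved nor obviously true (the regions $R_i^+$ can share boundary). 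In short, you have correctly identified where the work is, but you have not done it: the proposal would need Lemmas~\ref{l:convcompl} and~\ref{lemma:cone} and a precise analogue of Lemma~\ref{l:step} with its cases \ref{enumi: length decreases} and \ref{enumi: gap increases} to become a proof.
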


The proof for the Euclidean case, i.e., of Theorem~\ref{t:euclcase},
is set up so that it generalizes to smooth and rotund norms more
or less immediately; there is only one lemma where we need to
work harder---see Section~\ref{s:nrmproof}.

Our current proof method apparently depends both on smoothness
and on rotundity. In Section~\ref{s:nonuniq} we show that
smoothness is indeed essential, by exhibiting a non-smooth
but rotund norm in $\R^d$ with non-unique zone diagrams.
On the other hand, we suspect that the assumption
of rotundity in Theorem~\ref{t:norms} can be dropped.
Currently we have a proof (see Appendix~\ref{s:smoothonly}) 
only in a rather special case:

\begin{theorem}
\label{theorem: smooth two singleton plane}
For two point sites $P _0 = \{p _0\}$ and $P _1 = \{p _1\}$
in the plane $\Rset ^2$ with a smooth norm, 
there exists exactly one zone diagram. 
\end{theorem}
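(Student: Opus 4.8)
The plan is to combine the general fixed-point machinery with the special geometry of two point sites in the plane. Existence is already provided by Reem and Reich \cite{reem07:_zone_and_doubl_zone_diagr}, so everything is about uniqueness. Recall from the general set-up (Sections~\ref{s:prelim} and~\ref{s:euclcase}) that for two sites a zone diagram $(R _0,R _1)$ is precisely a fixed point of the monotone operator $\Phi:=\Dom _0\circ\Dom _1$ on the complete lattice of closed subsets of $\Rset ^2$, where $\Dom _i(S)=\{\,x:\|x-p _i\|\le\dist(x,S)\,\}$ and $R _1=\Dom _1(R _0)$. By Knaster--Tarski, $\Phi$ has a least fixed point $A$ and a greatest fixed point $B$, reached by iterating $\Phi$ from $\emptyset$, respectively from $\Rset ^2$, and every zone diagram satisfies $A\subseteq R _0\subseteq B$. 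Hence it suffices to prove $A=B$.

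The first step is a symmetry reduction. As the norm is even, the central involution $\sigma(x)=p _0+p _1-x$ is an isometry interchanging $p _0$ and $p _1$, and a direct check gives $\sigma\circ\Dom _0=\Dom _1\circ\sigma$, hence $\sigma\circ\Phi\circ\sigma=\Dom _1\circ\Dom _0$. Therefore $\sigma(A)$ and $\sigma(B)$ are the least and greatest fixed points of $\Dom _1\circ\Dom _0$, from which the fixed-point equations can be rewritten as
\[
B=\Dom _0(\sigma A)=\{\,x:\|x-p _0\|\le\dist(x,\sigma A)\,\},\qquad
A=\Dom _0(\sigma B)=\{\,x:\|x-p _0\|\le\dist(x,\sigma B)\,\}.
\]
So the whole problem is reduced to a single inclusion $A\subseteq B$ together with these two self-referential identities, and the goal is $A=B$.

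Now the geometric core. I would first record some elementary regularity: each of $A,B$ is closed, contains $p _0$ in its interior, is unbounded, and is star-shaped with respect to $p _0$ (if $x\in R _0$ and $y\in[p _0,x]$ then $\|y-p _0\|\le\|x-p _0\|$ while $\dist(y,R _1)\ge\dist(x,R _1)-\|x-y\|\ge\|y-p _0\|$), so $\alpha:=\bd A$ and $\beta:=\bd B$ are curves, and on the line $\ell=\operatorname{aff}\{p _0,p _1\}$ a one-dimensional computation (the norm restricted to $\ell$ is a multiple of the Euclidean one, and nearest points may be taken on $\ell$) shows $\alpha\cap\ell=\beta\cap\ell$ and identifies this common point with the trisection point $q=\tfrac13(2p _0+p _1)$. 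The plan is then to propagate agreement outward from $q$: parametrize $\beta$ from $q$ and set $T=\sup\{\,t:\beta| _{[0,t]}\subseteq\alpha\,\}$; this set is nonempty (it contains $0$), and since $\alpha$ is closed $\beta(T)\in\alpha\cap\beta$, while for $t$ slightly beyond $T$ the point $\beta(t)$ lies in the open gap $U:=\operatorname{int}B\setminus A$. At $p:=\beta(T)$ one has $\|p-p _0\|=\dist(p,\sigma A)=\dist(p,\sigma B)$, so every point of $\sigma A$ nearest to $p$ is also a point of $\sigma B$ nearest to $p$; smoothness of the norm — the uniqueness of the supporting line to each norm-ball, hence the fact that the outer normal is constant along any flat face of a sphere — then makes $\dist(\cdot,\sigma A)$ and $\dist(\cdot,\sigma B)$ behave identically at $p$ and forces $\alpha$ and $\beta$ to keep agreeing beyond $p$, contradicting the choice of $T$. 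Since $\alpha=\beta$ is exactly the distance trisector curve of $p _0,p _1$ for the given norm, this also re-proves its uniqueness.

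The step I expect to be the real obstacle is the last one, and it is precisely where the absence of rotundity hurts. Rotundity is what normally guarantees that nearest-point projections onto convex sets are single-valued and that $\dist(\cdot,C)$ is differentiable off $C$; with merely a smooth norm the spheres can have flat faces, the nearest point in $\sigma A$ need not be unique, and one must instead argue that these flat pieces are inherited by $A$ and $B$ in a compatible way and that, on the gap $U$, the strict inequality $\dist(\cdot,\sigma B)<\dist(\cdot,\sigma A)$ together with equality on $\bd U$ and the geometric meaning of nearest points cannot coexist — a maximum-principle-style argument that seems to genuinely require the planar, two-point structure. A secondary, more routine difficulty is establishing the regularity of $A$ and $B$ used above (that $\bd A,\bd B$ are Jordan curves behaving as graphs over $\ell$, and that the relevant distance functions are regular enough at the points in play), which again leans on the two-point planar setting and on smoothness, since for general norms the regions need not even be convex.
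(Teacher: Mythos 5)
Your high-level framing is fine: start from Theorem~\ref{t:doublezone}, reduce uniqueness to showing that the least and greatest fixed points $A$ and $B$ of $\Dom_0\circ\Dom_1$ coincide, and use the central involution $\sigma(x)=p_0+p_1-x$ to relate $A$ and $B$. The star-shapedness of the regions about their sites is also correct and elementary (though the paper does not need it). But the core of your argument has a genuine gap, and a secondary step is not justified as stated.

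The genuine gap is the ``propagation'' step, which you yourself flag. Knowing that at a single boundary point $p=\beta(T)$ one has $\|p-p_0\|=\dist(p,\sigma A)=\dist(p,\sigma B)$ tells you nothing about how the two distance functions compare at nearby points: the nearest points of $\sigma A$ and $\sigma B$ to $p$ need not coincide (only their distances do), and without rotundity the nearest-point map is multivalued and $\dist(\cdot,C)$ need not be differentiable, so there is no reason $\alpha$ and $\beta$ ``keep agreeing'' past $p$. This is exactly the kind of local continuation argument that fails without a Lipschitz-type estimate; smoothness of the norm (uniqueness of supporting lines to balls) is the wrong regularity for it. You do not offer any mechanism that would replace it, and there is no obvious maximum principle for these implicitly-defined regions. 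The paper does something fundamentally different: it keeps the global iteration scheme of Section~\ref{s:euclcase}, introduces $h=\min\{\dist(p_0,S_0\setminus R_0),\dist(p_1,S_1\setminus R_1)\}$, and proves via three quantitative lemmas about smooth norms (Lemmas~\ref{lemma: almost straight}, \ref{lemma: balls}, \ref{lemma: separate}) the exact identity $\|p_0-p_1\|=3h$ (Lemma~\ref{lemma: 3 l}); this forces the path $p_i$--$a(b_t)$--$b_{t+1}$--$p_{1-i}$ to be nearly straight, which in turn forces case~\ref{enumi: gap increases} at every step, and then $s(b_t)$ decreases by at least $\delta(b_0)>0$ each time, a contradiction. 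No cone lemma and no local continuation are used.

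A secondary issue: your justification of $\alpha\cap\ell=\beta\cap\ell=q$ by a ``one-dimensional computation'' with ``nearest points may be taken on $\ell$'' is not adequately argued. For a general smooth norm without any symmetry across $\ell$, it is not clear that the nearest point in $R_1$ to a point of $\ell$ lies on $\ell$. The trisection claim itself is true, but it requires a two-dimensional triangle-inequality argument: if $q_0=\bd R_0\cap\ell$, $q_1=\bd R_1\cap\ell$ and $y\in R_1$ is nearest to $q_0$, then $\|y-p_1\|\le\dist(y,R_0)\le\|y-q_0\|=\|q_0-p_0\|$ gives $\|p_0-p_1\|\le 3\|q_0-p_0\|$, while $\|q_0-p_0\|\le\|q_0-q_1\|$ (and symmetrically) gives the reverse once one knows $p_0,q_0,q_1,p_1$ lie in this order on $\ell$. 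This is essentially the same phenomenon the paper captures via $\|p_0-p_1\|=3h$, but phrased on the line; it is not a one-line reduction to 1-d.
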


\section{Preliminaries}\label{s:prelim}

Here we introduce notation and present some results from the literature,
some of them in a more general context than in the original publications.

Let $(X,\dist)$ be a general metric space.
The closure of a set $A\subseteq X$ is denoted by
$\closure{A}$, while $\bd A$ stands for its boundary.
The (closed) ball of radius $r$ centered at $x$
is denoted by $\cBall(x,r)$. 

For sets $A, B \subseteq X$, not both empty,
we define the \emph{dominance region} of $A$ over $B$
as the set
$$
\dom(A,B):=\{\, x\in X: \dist(x,A)\le \dist(x,B) \,\},
$$
where  
$$
\dist(C,D) := \inf _{x \in C, \ y\in D} \dist(x, y) \in [0, +\infty]
$$ 
denotes the distance of sets $C$ and $D$.

Let us fix an $n$-tuple $\PP=(P_1,\ldots,P_n)$ of 
sites, i.e., nonempty subsets of $X$ (which, as above,
we assume to be disjoint and closed).
For an $n$-tuple $\RR=(R_1,\ldots,R_n)$ of arbitrary
subsets of $X$, we define another $n$-tuple of regions
$\RR'=(R'_1,\ldots,R'_n)$ denoted by $\Dom\RR$ and given by
$$
R'_i:=\dom \Bigl( P _i, \bigcup _{j \neq i} R _j \Bigr), 
 \qquad 
  i = 1, \dots, n
$$
(the sites are considered fixed and they are a part of the definition
of the operator $\Dom$).

The definition of a zone diagram can now be expressed as follows:
An $n$-tuple $\RR$ is called a \emph{zone diagram} for the $n$-tuple
$\PP$ of sites if $\RR=\Dom\RR$ (componentwise equality,
i.e., $R_i=\dom\bigl(P _i, \bigcup _{j \neq i} R _j\bigr)$ for all $i$).

For two $n$-tuples $\RR$ and $\SS$ of sets, we write $\RR\preceq \SS$
if $R_i\subseteq S_i$ for every $i$. 
It is easily seen (see, e.g., \cite{asano07:_zone_diagr}) that the operator $\Dom$
is antimonotone, i.e., $\RR\preceq\SS$ implies $\Dom \RR\succeq \Dom\SS$.
Our starting point in the proof
of Theorems~\ref{t:euclcase} and~\ref{t:norms}
is the following general result
(see Appendix~\ref{appendix section: double zone diagrams} for a proof):

\begin{theorem}[{\cite[Lemma~5.1]{asano07:_zone_diagr}}, \relax{\cite[Theorem~5.5]{reem07:_zone_and_doubl_zone_diagr}}]
 \label{t:doublezone}
For every $n$-tuple $\PP$ of sites (in any metric space)
there exist $n$-tuples $\RR$ and $\SS$ 
such that $\RR=\Dom\SS$ and $\SS=\Dom \RR$. Moreover, for every
$n$-tuples $\RR',\SS'$ with $\RR'=\Dom\SS'$ and $\SS'=\Dom \RR'$
we have $\RR\preceq \RR',\SS'\preceq \SS$ 
(and in particular, $\RR\preceq\SS$).
\end{theorem}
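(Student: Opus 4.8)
The plan is to reduce the whole statement to a single planar claim by invoking the general sandwiching result. \emph{Existence} of a zone diagram for two sites holds in every metric space (Reem and Reich~\cite{reem07:_zone_and_doubl_zone_diagr}); it can also be read off from Theorem~\ref{t:doublezone}: if $\RR=(R_0,R_1)$ and $\SS=(S_0,S_1)$ are the minimal and maximal double zone diagrams of $\PP=(\{p_0\},\{p_1\})$, then the defining equations $R_0=\dom(p_0,S_1)$, $S_1=\dom(p_1,R_0)$, $S_0=\dom(p_0,R_1)$, $R_1=\dom(p_1,S_0)$ show that the mixed tuple $(R_0,S_1)$ is a fixed point of $\Dom$, hence a zone diagram. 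For \emph{uniqueness}, any zone diagram $\TT'$ is in particular a double zone diagram ($\TT'=\Dom\TT'$), so Theorem~\ref{t:doublezone} gives $\RR\preceq\TT'\preceq\SS$; therefore it suffices to prove $\RR=\SS$, i.e.\ $R_0=S_0$ and $R_1=S_1$.

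Two reductions set the stage. First, the point reflection $\sigma$ about $\tfrac12(p_0+p_1)$ is an isometry exchanging the two sites; since $\dom$ commutes with isometries, $\sigma$ together with the coordinate swap conjugates $\Dom$ to itself and therefore fixes the minimal double zone diagram, giving $\sigma R_0=R_1$ and $\sigma S_0=S_1$. Hence $R_0=S_0$ if and only if $R_1=S_1$, and it is enough to show that the ``gap'' $D:=S_0\setminus R_0$ is empty. Second, one notes the elementary facts $p_0\in\operatorname{int}R_0\subseteq S_0$, $p_1\in\operatorname{int}R_1\subseteq S_1$, $p_0\notin S_1$, $p_1\notin S_0$, and $\bd R_0\subseteq\{\,x:\|x-p_0\|=\dist(x,S_1)\,\}$, so that for $x\in\bd R_0$ a nearest point $w$ of $S_1$ lies on $\bd S_1$ and satisfies $\|x-w\|=\|x-p_0\|$; by \emph{smoothness} the ball $\cBall(x,\|x-p_0\|)$ then has a unique supporting line at $w$. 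This rigidity is the only feature of the norm we use, and it is precisely what fails for $\ell_1$.

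The heart of the proof is showing $D=\emptyset$. Suppose not. A first sub-step is that $D$ --- and hence $S_1\setminus R_1=\sigma(D)$ --- is bounded, equivalently that $R_0$ and $S_0$ agree outside a large ball; this requires a separate argument on the behaviour of the regions at infinity, but uses only that we are in the plane with two point sites. Then $\closure D$ is compact and nonempty, so pick $z\in\closure D$ maximizing $\|z-p_0\|$. One has $\dist(z,S_1)\le\|z-p_0\|\le\dist(z,R_1)$; moving $z$ radially away from $p_0$ preserves the left inequality (the distance to $S_1$ is $1$-Lipschitz while $\|\mathord\cdot-p_0\|$ grows at unit rate along the ray) and increases $\|z-p_0\|$, so maximality forces $\|z-p_0\|=\dist(z,R_1)$ or $\|z-p_0\|=\dist(z,S_1)$. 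In the latter case let $w\in S_1$ be nearest, $\|z-w\|=\|z-p_0\|$: if $w\in R_1$ the two distances coincide and we are in the former case; if $w\in S_1\setminus R_1$, we use the unique supporting line of $\cBall(z,\|z-p_0\|)$ at $w$ to move $z$ in a direction along which $\|\mathord\cdot-p_0\|$ increases while $\|\mathord\cdot-w\|$ (an upper bound for $\dist(\mathord\cdot,S_1)$) increases no faster, producing a point of $D$ farther from $p_0$ --- contradicting maximality. The former case ($z\in\bd S_0$) and, above all, the degenerate configuration in which the unit sphere is \emph{flat} in the direction shared by $z-w$ and $z-p_0$, so that this first-order push collapses, must be handled separately; this, with the boundedness sub-step, is where the actual work lies, and it is why the argument is more delicate than that of Theorem~\ref{t:norms}, where rotundity excludes the flat case outright. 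Once $D=\emptyset$ we obtain $R_0=S_0$ and $R_1=S_1$, so $\RR=\SS$ and the zone diagram is unique.
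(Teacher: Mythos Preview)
Your proposal does not address the stated theorem. Theorem~\ref{t:doublezone} asserts, for an arbitrary $n$-tuple of sites in an arbitrary metric space, the existence of a \emph{double zone diagram} --- a pair $\RR,\SS$ with $\RR=\Dom\SS$ and $\SS=\Dom\RR$ --- together with the sandwich property $\RR\preceq\RR',\SS'\preceq\SS$ for any other such pair. Your write-up instead \emph{assumes} this result (``invoking the general sandwiching result'', ``Theorem~\ref{t:doublezone} gives $\RR\preceq\TT'\preceq\SS$'') and proceeds to argue existence and uniqueness of the \emph{zone diagram} for two point sites in the plane under a smooth norm. That is Theorem~\ref{theorem: smooth two singleton plane}, not Theorem~\ref{t:doublezone}; you have proved (or sketched) the wrong statement.

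The paper's proof of Theorem~\ref{t:doublezone} is entirely different in nature: it is a pure fixed-point argument with no geometric content. One applies the Knaster--Tarski theorem to the monotone map $g=\Dom^2$ on the complete lattice of $n$-tuples $\DD$ with $P_i\subseteq D_i$ (ordered by componentwise inclusion), obtaining a least fixed point $\RR$; one sets $\SS:=\Dom\RR$, checks $\Dom\SS=\Dom^2\RR=\RR$, and observes that any pair $(\RR',\SS')$ with $\RR'=\Dom\SS'$, $\SS'=\Dom\RR'$ consists of fixed points of $\Dom^2$, whence the sandwich. Alternatively, the iterative scheme $\RR^{(0)}=\PP$, $\SS^{(k)}=\Dom\RR^{(k)}$, $\RR^{(k+1)}=\Dom\SS^{(k)}$ converges (under mild hypotheses) to the desired pair. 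Neither approach appeals to smoothness, planarity, or the number of sites.

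As a separate matter, even read as an attempt at Theorem~\ref{theorem: smooth two singleton plane} your sketch leaves the two places you yourself flag --- boundedness of $D=S_0\setminus R_0$ and the handling of the ``flat'' configuration --- entirely open, and the paper's own proof in Appendix~\ref{s:smoothonly} shows that precisely these are where the substance lies (it proceeds via Lemmas~\ref{lemma: almost straight}--\ref{lemma: 3 l} and a descent argument on $s(b_t)$, not by maximizing $\|z-p_0\|$ over $\closure D$).
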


We finish this section with a simple geometric lemma.
It was used, in a less general setting, in \cite{asano07:_zone_diagr}
(proof of Lemma~4.3).

\begin{obs}\label{o:eps4ball}
Let $\PP$ be an $n$-tuple of sites (in an arbitrary metric space), and
suppose that $\varepsilon := \min _{i \neq j} \dist (P _i, P _j) > 0
$ and that $\RR$ and $\SS$ satisfy $\RR=\Dom\SS$ and $\SS=\Dom\RR$.
Then $\dist (P _i, \bigcup _{j \neq i} S _j) \geq \frac \eps 2$,
and consequently, the $\frac\eps4$-neighborhood
of each $P_i$ is contained in $R_i$.
\end{obs}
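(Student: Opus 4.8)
The plan is to derive both assertions from two elementary facts: each site is contained in its own region (in either of the tuples $\RR$, $\SS$), and the triangle inequality in the forms needed for point‑to‑set and set‑to‑set distances. I would first record that $P_i\subseteq R_i$ and $P_i\subseteq S_i$ for every $i$: if $x\in P_i$ then $\dist(x,P_i)=0\le\dist(x,\bigcup_{j\ne i}S_j)$, so $x\in\dom\bigl(P_i,\bigcup_{j\ne i}S_j\bigr)=R_i$ because $\RR=\Dom\SS$; the inclusion $P_i\subseteq S_i$ is identical, using $\SS=\Dom\RR$.

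Next I would prove $\dist\bigl(P_i,\bigcup_{j\ne i}S_j\bigr)\ge\frac\eps2$. Fix $j\ne i$ and a point $y\in S_j$. Since $\SS=\Dom\RR$, the point $y$ lies in $\dom\bigl(P_j,\bigcup_{k\ne j}R_k\bigr)$, so $\dist(y,P_j)\le\dist\bigl(y,\bigcup_{k\ne j}R_k\bigr)$. As $i\ne j$, the region $R_i$ is one of the sets forming the union $\bigcup_{k\ne j}R_k$, whence $\dist\bigl(y,\bigcup_{k\ne j}R_k\bigr)\le\dist(y,R_i)\le\dist(y,P_i)$, the last step using $P_i\subseteq R_i$. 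Combining the inequalities gives $\dist(y,P_j)\le\dist(y,P_i)$, and then the triangle inequality yields $\eps\le\dist(P_i,P_j)\le\dist(y,P_i)+\dist(y,P_j)\le 2\,\dist(y,P_i)$, so $\dist(y,P_i)\ge\frac\eps2$. Since $j\ne i$ and $y\in S_j$ were arbitrary, this gives $\dist\bigl(P_i,\bigcup_{j\ne i}S_j\bigr)\ge\frac\eps2$.

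For the stated consequence I would take any $x$ with $\dist(x,P_i)\le\frac\eps4$ and show $x\in R_i=\dom\bigl(P_i,\bigcup_{j\ne i}S_j\bigr)$. For every $z\in\bigcup_{j\ne i}S_j$ the triangle inequality for the point‑to‑set distance gives $\dist(P_i,z)\le\dist(x,P_i)+\dist(x,z)$, hence $\dist(x,z)\ge\dist\bigl(P_i,\bigcup_{j\ne i}S_j\bigr)-\dist(x,P_i)\ge\frac\eps2-\frac\eps4=\frac\eps4\ge\dist(x,P_i)$; taking the infimum over $z$ shows $\dist\bigl(x,\bigcup_{j\ne i}S_j\bigr)\ge\dist(x,P_i)$, i.e.\ $x\in R_i$, as required.

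The argument is short, so there is no substantial obstacle; the only points demanding care are to apply $\Dom$ in its correct \emph{antimonotone} sense (so that $R_i$ really occurs in the union defining $S_j$ when $j\ne i$), and to use the triangle inequality in the slightly nonstandard forms $\dist(P_i,P_j)\le\dist(y,P_i)+\dist(y,P_j)$ and $\dist(P_i,z)\le\dist(x,P_i)+\dist(x,z)$, each of which follows at once by taking infima in the pointwise triangle inequality over the relevant set.
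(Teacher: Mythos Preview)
Your proof is correct and follows essentially the same argument as the paper's: both establish that any $y\in S_j$ satisfies $\dist(y,P_j)\le\dist(y,P_i)$ (the paper phrases this as $S_j\subseteq V_j$ via antimonotonicity of $\Dom$ and the Voronoi diagram $\VV=\Dom\PP$, while you unpack the definitions directly), and then derive the $\tfrac\eps2$ bound by the triangle inequality. The only difference is packaging, not substance.
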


\begin{proof}
We recall the simple proof from
\cite{asano07:_zone_diagr}.
We first note that $\VV=(V _1, \dots, V _n) := \Dom \PP$ is
the classical Voronoi diagram of $\PP$,
and the open $\frac\eps2$-neighborhood of $P_i$
does not intersect $\bigcup_{j\ne i} V_j$.
Since $\PP\preceq \RR$, we have $\Dom \PP \succeq \Dom \RR = \SS$, 
and hence the open $\frac\eps2$-neighborhood of $P_i$
is disjoint from $\bigcup_{j\ne i} S _j$ as well, as claimed.
\end{proof}

\section{The Euclidean case}\label{s:euclcase}

Here we prove Theorem~\ref{t:euclcase};
throughout this section, $\dist$ denotes the Euclidean
distance. In addition
to Theorem~\ref{t:doublezone} and Observation~\ref{o:eps4ball},
we also need the next lemma.

\begin{lemma}[Cone lemma, Euclidean case] \label{l:eucl-cone}
Let $\PP$ be an $n$-tuple of (nonempty closed) sites in $\R^d$ with
the Euclidean metric with
$\varepsilon:=\min_{i\ne j} \dist (P _i, P _j) > 0$,
and let $\RR$ and $\SS$ satisfy $\RR=\Dom\SS$ and $\SS=\Dom\RR$.
Let $a$ be a point of some $R_i$, and let $p\in P_i$
be a point of the corresponding site closest to $a$ (such a nearest
point exists by compactness).
Then the set
\begin{equation*}
 K := \conv \bigl( \{a\} \cup \cBall (p,{\textstyle\frac\eps4}) \bigr)
\end{equation*}
is contained in $R_i$; see Fig.~\ref{f:coneK}.
\labepsfig{coneK}{The cone $K$.}
\end{lemma}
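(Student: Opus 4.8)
The plan is to fix a single ``competitor point'' $q$ and reduce the whole statement to an inclusion of a convex set in a half-space. Note first that it is enough to show the following: for every index $j\ne i$ and every $q\in S_j$, each point $x\in K$ satisfies $\dist(x,P_i)\le\|x-q\|$. Indeed, taking the infimum over all such $q$ then gives $\dist(x,P_i)\le\dist\bigl(x,\bigcup_{j\ne i}S_j\bigr)$, i.e.\ $x\in R_i$, for every $x\in K$. So fix $j\ne i$ and $q\in S_j$ and work with them from now on.

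For this fixed $q$, consider the set $H:=\{\,y\in\R^d:\|y-p\|\le\|y-q\|\,\}$ of all points at least as close to $p$ as to $q$. In the Euclidean metric $H$ is a closed half-space, hence convex; and since $p\in P_i$, every $y\in H$ has $\dist(y,P_i)\le\|y-p\|\le\|y-q\|$. Thus it suffices to prove $K\subseteq H$, and because $H$ is convex and $K=\conv\bigl(\{a\}\cup\cBall(p,\tfrac\eps4)\bigr)$, it suffices to check $a\in H$ and $\cBall(p,\tfrac\eps4)\subseteq H$. The first is immediate: $a\in R_i$ and $p$ is a nearest point of $P_i$ to $a$, so $\|a-p\|=\dist(a,P_i)\le\dist\bigl(a,\bigcup_{j\ne i}S_j\bigr)\le\|a-q\|$. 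For the second, Observation~\ref{o:eps4ball} gives $\|q-p\|\ge\dist(q,P_i)\ge\dist\bigl(P_i,\bigcup_{j\ne i}S_j\bigr)\ge\tfrac\eps2$, so any $b$ with $\|b-p\|\le\tfrac\eps4$ satisfies $\|b-q\|\ge\|q-p\|-\|p-b\|\ge\tfrac\eps2-\tfrac\eps4=\tfrac\eps4\ge\|b-p\|$, i.e.\ $b\in H$.

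I do not expect a genuine obstacle in the Euclidean case; the argument rests on two small points. The first is that one should not reason with the actual region $\dom(P_i,\{q\})$ (a union of half-spaces, in general non-convex), but with the single convex half-space $H$ of points closer to the \emph{one} chosen nearest site point $p$ than to $q$. The second is that the quantitative gap between $\tfrac\eps4$ and $\tfrac\eps2$ furnished by Observation~\ref{o:eps4ball} is exactly what is needed to squeeze $\cBall(p,\tfrac\eps4)$ into $H$. The analogue of this lemma for a smooth and rotund norm will be harder precisely because there the set $H$ need no longer be convex, which is why that case is treated separately in Section~\ref{s:nrmproof}.
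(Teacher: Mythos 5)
Your proof is correct and follows essentially the same route as the paper: the paper states directly that $\dom(p,\bigcup_{j\ne i}S_j)$ is convex as an intersection of halfspaces and contains both $a$ and $\cBall(p,\tfrac\eps4)$, whereas you unpack that intersection into individual halfspaces $H=\dom(p,\{q\})$ and verify the containment one competitor $q$ at a time. The underlying ideas — convexity of $\dom(p,\mathord\cdot)$ in the Euclidean case, $a\in\dom(p,\bigcup S_j)$ because $p$ is the nearest site point to $a$, and the $\eps/2$ versus $\eps/4$ gap from Observation~\ref{o:eps4ball} — are identical.
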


The following proof is rather specific for the Euclidean metric 
(the lemma fails for the $\ell_1$ metric, for example).

\begin{proof}
Both $a$ and $\cBall (p,\frac\eps4)$ are contained
in $\dom(p,\bigcup_{j\ne i}S_j)$ (the latter by Observation~\ref{o:eps4ball}).
For the Euclidean metric, the dominance region of a point over
any set is convex, since it is the intersection of
halfspaces. Hence $K\subseteq \dom(p,\bigcup_{j\ne i}S_j)\subseteq R_i$.
\end{proof}

Now we describe the general strategy of the proof of Theorem~\ref{t:euclcase}.
With $\RR$ and $\SS$ as in Theorem~\ref{t:doublezone}, 
it suffices to prove $\RR=\SS$. 
For contradiction, we assume that it is not the case,
i.e., that $R:=\bigcup_{i=1}^n R_i$ is properly contained in 
$S:=\bigcup_{i=1}^n S_i$; see
the schematic illustration in Fig.~\ref{f:RandS}.

\labepsfig{RandS}{The setting of the proof of Theorem~\ref{t:euclcase}
(a schematic picture).}

For a point $b \in S \setminus P$,
let $s (b) := \dist (b, P)$ be the distance from the nearest site,
and let $p = p (b) \in P _i$ be a point where this distance is attained.
Let $a = a (b)$ be the closest point to $b$ 
that lies in the intersection of $R _i$ with the segment~$b p$.
It is easily seen, using the triangle inequality,
that $p$ is also a nearest point of $P$ to $a$.
Thus, the set $K$ in Lemma~\ref{l:eucl-cone} is contained in $R _i$,
and in particular, $a$ is the only intersection
of the segment $bp$ with $\bd R_i$.
We set $\delta (b) := \dist(b,a)$.
The parameters $s(b)$ and $\delta(b)$ will measure, in some sense,
how much $S$ differs from $R$ ``at $b$''.

Assuming $\RR \neq \SS$, we choose a point $b _0 \in S \setminus R$.
Then, using $b_0$, we find $b_1\in S\setminus R$ where
$S$ differs from $R$ ``more than'' at $b_1$. Iterating the same procedure
we obtain an infinite sequence $b_0,b_1,b_2,b_3,\ldots$ of points,
and the difference will ``grow'' beyond bounds, while, on the other hand,
it has to stay bounded---and this way we reach a contradiction.

More concretely, for every integer $t\ge 1$ we will construct
$b_t$ from $b_{t-1}$ so that, with $s:=s(b_{t-1})$,
$s':=s(b_t)$, $\delta:=\delta(b_{t-1})$, and $\delta':=\delta(b_t)$,
we have
\begin{enumerate}
\def\theenumi{\textup{(\Alph{enumi})}} 
\def\labelenumi{\theenumi}
\item \label{enumi: length decreases}
$s'\leq s - \alpha$, {or}
\item \label{enumi: gap increases}
$s' \leq s - \delta$ and 
$\delta' \geq \delta$,
\end{enumerate}
where $\alpha > 0$ is a constant that
depends on $s _0 := s (b _0)$ and $\eps$,  but not on $t$.

Thus, as $t$ increases, $s (b _t)$ keeps decreasing.
Since $s (b _t)$ is bounded from below by $\frac\eps4$
by Observation~\ref{o:eps4ball},
case~\ref{enumi: length decreases} can happen only finitely many times.
Therefore, from some $t$ on,
we have case~\ref{enumi: gap increases} only.
But this also causes $s (b _t)$ to decrease towards $0$---a contradiction.

It remains to describe the construction of $b_{t}$ from $b_{t-1}$,
and this is done in the next lemma.

\begin{lemma}\label{l:step}
For every $s _0$ and $\eps>0$ there exists $\alpha > 0$
such that if $b \in S \setminus R$ satisfies $s := s (b) \leq s _0$,
then there exists another point $b' \in S \setminus R$ such that
$s' := s (b')$, $\delta := \delta (b)$ and $\delta' := \delta (b')$ satisfy
\ref{enumi: length decreases} or \ref{enumi: gap increases}.
\end{lemma}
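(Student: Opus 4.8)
The plan is, given $b\in S\setminus R$ with $s:=s(b)\le s_0$, writing $p\in P_i$ for a nearest site point, $a:=a(b)$, $\delta:=\delta(b)$, to take $b'$ to be a nearest point of the (closed, nonempty) set $\bigcup_{k\ne i}S_k$ to the point $a$, and to verify \ref{enumi: length decreases} or \ref{enumi: gap increases} for $b'$; the constant $\alpha=\alpha(s_0,\eps)>0$ is fixed only at the end. (Assume $n\ge2$, as $S\setminus R=\emptyset$ otherwise; recall $\RR\preceq\SS$, so $R_k\subseteq S_k$ for every $k$, and $0<\delta<s$.) The text already records that $p$ is a nearest point of $P$ to $a$, so $\dist(a,P_i)=|ap|=s-\delta$, and that $a\in\bd R_i$; since the complement of $R_i=\dom(P_i,\bigcup_{k\ne i}S_k)$ is open, equality holds at $a$, i.e.\ $\dist(a,\bigcup_{k\ne i}S_k)=s-\delta$. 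I would also note that $b\in S_i$ (otherwise $b\in S_j$ for some $j\ne i$ and $\dist(b,P_j)\le\dist(b,\bigcup_{\ell\ne j}R_\ell)\le\dist(b,R_i)\le|ba|=\delta<s=\dist(b,P)$, absurd), so $b\notin\bigcup_{k\ne i}S_k$ and $\dist(b,\bigcup_{k\ne i}R_k)\ge\dist(b,P_i)=s$. Let $b'\in S_k$, $k\ne i$, be the chosen point; then $b'\ne b$ and $|ab'|=s-\delta$.

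Two easy consequences: (i)~$s':=s(b')\le\dist(b',P_k)\le\dist(b',\bigcup_{\ell\ne k}R_\ell)\le\dist(b',R_i)\le|ab'|=s-\delta$; (ii)~$b'\in S\setminus R$. For (ii): $|bb'|\le|ba|+|ab'|=s$, with equality only if $a$ lies between $b$ and $b'$, which would force $b'=p$ (as $|ab'|=s-\delta=|ap|$ and $p$ lies on the ray from $b$ through $a$); but $p\notin\bigcup_{k\ne i}S_k$ (since $p\in S_\ell$, $\ell\ne i$, would give $\dist(p,P_\ell)\le\dist(p,R_i)=0$), so $|bb'|<s$. If $b'\in R_m$, then $m=k$ gives $b'\in\bigcup_{\ell\ne i}R_\ell$ and $|bb'|\ge s$, while $m\ne k$ gives $b'\in R_m\cap S_k\subseteq P_m$ (because $\dist(b',P_m)\le\dist(b',S_k)=0$) and $|bb'|\ge\dist(b,P_m)\ge s$; both contradict $|bb'|<s$. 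Hence $b'\in S\setminus R$, and $s'\ge\tfrac\eps4$ by Observation~\ref{o:eps4ball}. By (i), if $\delta\ge\alpha$ then \ref{enumi: length decreases} holds; so assume $\delta<\alpha$, in which case, again by (i), \ref{enumi: gap increases} follows once $\delta':=\delta(b')\ge\delta$ is shown.

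Suppose $\delta'<\delta$; let $p'\in P_{i'}$ be a nearest site point of $b'$ and $a'=a(b')\in R_{i'}\cap[b'p']$, so $|b'a'|=\delta'<\delta$. If $i'\ne k$, then $S_k\subseteq\bigcup_{\ell\ne i'}S_\ell$, so $\dist(a',P_{i'})\le\dist(a',S_k)\le|a'b'|<\delta$ and thus $\dist(a',P)<\delta$; but $\dist(a',P)\ge\dist(b',P)-|a'b'|>s'-\delta\ge\tfrac\eps4-\delta$, whence $\delta>\tfrac\eps8$, contradicting $\delta<\alpha\le\tfrac\eps8$. So $i'=k$, $p'\in P_k$. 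The crux: since $b'$ realizes the distance $s-\delta$ from $a$ to $\bigcup_{\ell\ne i}S_\ell$, and $R_k\subseteq S_k\subseteq\bigcup_{\ell\ne i}S_\ell$, the region $R_k$ is disjoint from the open ball $\cBall(a,s-\delta)^{\circ}$; in particular $\cBall(p',\tfrac\eps4)\subseteq R_k$ is disjoint from it (so $|ap'|\ge(s-\delta)+\tfrac\eps4$), and by Lemma~\ref{l:eucl-cone} at $a'\in R_k$ the cone $\conv(\{a'\}\cup\cBall(p',\tfrac\eps4))\subseteq R_k$ is disjoint from it. With $u=(p'-b')/|p'-b'|$ (so $a'=b'+\delta'u$), the inequality $|a'a|\ge s-\delta$ gives $\langle u,\,a-b'\rangle\le\delta'/2$ — the direction from $b'$ to $p'$ is nearly orthogonal to $a-b'$ when $\delta'$ is small. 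Plugging this, together with $s-\delta>\eps/8$ (as $\alpha<\eps/8$) and $s'\le s-\delta\le s_0$, into the condition that the segment from $a'$ to the point $q:=p'+\tfrac\eps4\tfrac{a-p'}{|a-p'|}$ (the point of $\cBall(p',\tfrac\eps4)$ nearest $a$) stays at distance $\ge s-\delta$ from $a$, a short computation bounds $\delta'$ below by a positive $c=c(\eps,s_0)$; choosing $\alpha<\min(\eps/8,c)$ then contradicts $\delta'<\alpha$. So $\delta'\ge\delta$, and with this $\alpha$ the point $b'$ always satisfies \ref{enumi: length decreases} or \ref{enumi: gap increases}.

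I expect this last step — quantitatively ruling out $\delta'<\delta$ when $i'=k$ — to be the main obstacle. It is the one place where the Euclidean structure is used essentially (through convexity of balls, i.e.\ the existence of a supporting half-space at a nearest point of a set), and it is precisely what fails for a norm like $\ell_1$.
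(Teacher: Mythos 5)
Your first two paragraphs are essentially the paper's own argument: the choice of $b'$ at distance $r=s-\delta$ from $a$ in $\bigcup_{\ell\ne i}S_\ell$, the verification $s'\le s-\delta$, the rotundity-based argument for $b'\notin R$, and the reduction to the case $\delta<\alpha$ all match. (The sub-case $i'\ne k$ is actually vacuous: since $b'\in S_k=\dom(P_k,\bigcup_{\ell\ne k}R_\ell)$ and $P_\ell\subseteq R_\ell$, one has $\dist(b',P_k)\le\dist(b',P\setminus P_k)$, so $p(b')$ can always be chosen in $P_k$.)

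The gap is in the ``crux'': you aim to rule out $\delta'<\delta$ outright, i.e.\ to prove that case~\ref{enumi: gap increases} \emph{always} holds once $\delta<\alpha$, with the threshold coming from a claimed inequality $\delta'\ge c(\eps,s_0)$. That inequality is false under the constraints you invoke, and the paper does not assert it. Concretely, take $a=0$, $b'=(r,0)$, and $u=(1,0)$ (so $a'=(r+\delta',0)$, $p'=(r+s',0)$, $q=(r+s'-\eps/4,0)$). All the constraints you feed into the ``short computation'' hold for \emph{every} $\delta'>0$: $\langle u,a-b'\rangle=-r\le\delta'/2$; $|ap'|=r+s'\ge r+\eps/4$; and the segment $a'q$ lies on the $x$-axis with $x\ge r$, hence entirely outside $\cBall(a,r)$. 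So those constraints place no lower bound on $\delta'$ at all. What does constrain $\delta'$ is $\dist(a',b)\ge s$ (which you used earlier but not here): with $p=(0,-r)$, $b=(0,\delta)$, it forces $\delta'(2r+\delta')\ge 2r\delta$, i.e.\ $\delta-\delta'\le(\delta')^2/(2r)$ — so $\delta'$ can be slightly below $\delta$, and both tend to $0$ with $\alpha$; no constant lower bound exists. In exactly this configuration one has $|bb'|=\sqrt{r^2+\delta^2}>r$, i.e.\ $b\notin\cBall(b',r)$, which is precisely the sub-case the paper isolates. There the paper does \emph{not} try to get $\delta'\ge\delta$; instead it uses the cone $K=\conv(\{a\}\cup\cBall(p,\eps/4))$ \emph{at $a$} (not at $a'$) together with the fact that $\cBall(b',r)$, being tangent to $\partial\cBall(a,r)$ at $a$, must cut a segment of definite length $\beta$ from one of two rays out of $a$ separated by an angle bounded away from $\pi$; since $b\notin\cBall(b',r)$ and $\delta<\alpha<\beta$ that ray lies inside $K$, and one concludes $\dist(b',K)\le r-\alpha$, hence $s'\le s-\alpha$ and case~\ref{enumi: length decreases}. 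You would need to replace your final step by something like this dichotomy ($b\in\cBall(b',r)$ giving~\ref{enumi: gap increases}, $b\notin\cBall(b',r)$ giving~\ref{enumi: length decreases}); trying to prove~\ref{enumi: gap increases} unconditionally cannot work.
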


\begin{proof}
Let $b \in S _i$, let $a := a(b)$, $p:=p(b)$,
and write $r = \dist(a,p)$; see Fig.~\ref{f:newb1}.
Since $a \in \bd R_i$ and $\RR = \Dom \SS$,
there exist $j \neq i$ and $b' \in S _j$ with $\dist(a,b')=r$.
If there are several possible $b'$, we choose one of them arbitrarily.
\labepsfig{newb1}{The construction of $b'$.}

First we check that $b'\not\in R$, or in other words, that
$\delta'>0$. During this step we also derive a lower bound for $\delta'$ that
will be useful later. Since $b\in S$, $a'\in R$, and $\SS=\Dom\RR$,
we have $\dist(a',b)\ge s$. Then we bound, using the triangle inequality,
\begin{equation}\label{e:delta'}
\delta'\ge \dist(a',b)-\dist(b,b')\ge s-\dist(b,b').
\end{equation}
Supposing for contradiction that $\delta'=0$, we get
$\dist(b,b')=s$. But the triangle inequality
gives $\dist(b,b') \le \dist(b,a)+\dist(a,b')=r+\delta=s$,
and hence the triangle inequality here holds with equality.
For the Euclidean metric, this can happen only if $a$ lies
on the segment $bb'$,
and then $b'$ has to coincide with $p$, which is impossible.
So $\delta'>0$ indeed.

Next, since $\SS=\Dom\RR$ and $b'\in S$, 
we have $s'\le \dist(b',R_i)$. An obvious upper bound
for $\dist(b',R_i)$ is $\dist(b',a)=r=s-\delta$, and thus
the first inequality in \ref{enumi: gap increases},
namely, $s'\le s-\delta$, always holds.  

Moreover, if $\delta\ge\alpha$,
then $s'\le s-\delta\le s-\alpha$,
and we have \ref{enumi: length decreases}.
For the rest of the proof we thus assume that $\delta<\alpha$
(where $\alpha$ hasn't been fixed yet---so far we're
free to choose it as a positive function of $\eps$ and $s_0$
in any way we like).

Let us consider the ball $\cBall (b',r)$; see Fig.~\ref{f:cutball}.
If it contains $b$, as in the left picture, we have
$\dist(b',b)\le r$, and thus by (\ref{e:delta'}) we have
$\delta'\ge s-r=\delta$. Then \ref{enumi: gap increases} holds.
Thus, the last case to deal with is
$b\not\in \cBall (b',r)$.

\labepsfig{cutball}{The $r$-ball around $b'$.}

Let us consider the cone $K=\conv(\{a\}\cup \cBall(p,\frac\eps4))$
as in Lemma~\ref{l:eucl-cone}. Its opening angle $\gamma$ is
bounded away from $0$ in terms of $\eps$ and $s_0$.

Let $\Pi$ be a 2-dimensional plane containing $p,a,b'$;
it also contains $b$ since $p,a,b$ are collinear.
Let $k$ be the ray originating at $a$ and containing $b$,
and let $\ell$ be the ray in $\Pi$ originating at $a$
and making the angle $\pi-\frac\gamma2$ with $k$ (on the side of $b'$);
see Fig.~\ref{f:cutball} right. 

Since the angle of the rays $k$ and $\ell$ is bounded away from
the straight angle, the Euclidean ball $\cBall (b',r)$ cuts a segment
of significant length $\beta$ 
from at least one of these rays; here $\beta$ can be bounded
from below by a positive quantity depending only on  $s_0$ and $\eps$.
So far we haven't fixed $\alpha$, and so now we can make
sure that $\alpha<\beta$.
Since we assume $b\not\in \cBall (b',r)$, the segment of length
$\beta$ cut out by $\cBall (b',r)$ can't belong to the ray $k$.
So the situation is as in Fig.~\ref{f:cutball} right:
$\cBall (b',r)$ contains the initial segment $ac$ of $\ell$
of length $\beta$. Hence $\dist(b',c)\le r$.

The distance $\dist(c,\R^d\setminus K)$ is bounded away from $0$
in terms of $\beta$ and $\gamma$, and so we may fix $\alpha$
so that  $\dist(c,\R^d\setminus K)\ge \alpha$. 

Let $c'$ be the point where the segment $b'c$ meets the boundary of~$K$.
We have 
$$
\dist(b',K)\le \dist(b',c')=\dist(b',c)-\dist(c,c')\le
r-\dist(c,\R^d\setminus K)\le r-\alpha.
$$
Then, finally, using $K\subseteq R_i$, we have
$$
s'\le \dist(b',R_i)\le \dist(b',K) \le r-\alpha<s-\alpha,
$$
and so \ref{enumi: length decreases} holds.
This concludes the proof of Lemma~\ref{l:step},
as well as that of Theorem~\ref{t:euclcase}.
\end{proof}

\section{The case of smooth and rotund norms}\label{s:nrmproof}

In this section we establish Theorem~\ref{t:norms}.
We begin with the part where the proof differs from the Euclidean
case: the cone lemma. In the Euclidean case, we used the fact
that for points $p\ne q$, $\dom(p,q)$ is a halfspace,
and consequently, $\dom(p,X)$ is convex for arbitrary $X$.
For other norms $\dom(p,q)$ need not be convex, though;
see Fig.~\ref{f:l4bisect}.

\labepsfig{l4bisect}{The dominance region of the point $(0,0)$
against $(2,1)$ in the $\ell_4$ norm.}

We have at least the following convexity result.

\begin{lemma}\label{l:convcompl}
Let us consider $\R^d$ with an arbitrary norm
$\| \mathord\cdot \|$, let $H$ be a closed halfspace, and let $p \notin H$ be a point.
Then $\dom(p,H)$ is convex.

Consequently,
if the complement of a closed set $A \subseteq \Rset ^d$ is convex and 
$p \notin A$, then
$\dom (p, A)$ is convex.
\end{lemma}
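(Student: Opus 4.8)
The plan is to derive the corollary from the first statement, and to prove the first statement by writing $\dom(p,H)$ as a sublevel set of a single convex function. So suppose first that $H=\{\,x\in\R^d:\varphi(x)\le c\,\}$ for some nonzero linear functional $\varphi$ and some $c\in\R$, with $p\notin H$, i.e.\ $\varphi(p)>c$. The only slightly technical ingredient is the (standard) formula for the distance, in the norm $\|\cdot\|$, from a point to a halfspace: with $\mu:=\sup\{\,\varphi(v):\|v\|\le1\,\}>0$ one has $\dist(x,H)=\max\{0,(\varphi(x)-c)/\mu\}$; in particular, on the open halfspace $\R^d\setminus H$ the function $\dist(\cdot,H)$ coincides with the affine function $\ell(x):=(\varphi(x)-c)/\mu$, while $\dist(x,H)=0$ for $x\in H$.

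Given this, since $p\notin H$, no point of $H$ can lie in $\dom(p,H)$ (such a point $x$ would have to satisfy $\|x-p\|\le0$), and therefore $\dom(p,H)=\{\,x\in\R^d\setminus H:\|x-p\|\le\ell(x)\,\}$. Next I would observe that the norm inequality already implies $x\notin H$: if $\|x-p\|\le\ell(x)$ then $\ell(x)\ge\|x-p\|\ge0$, and $\ell(x)=0$ would force $x=p$, contradicting $p\notin H$; hence $\ell(x)>0$, i.e.\ $x\notin H$. Consequently $\dom(p,H)=\{\,x\in\R^d:\|x-p\|-\ell(x)\le0\,\}$, which is convex because $x\mapsto\|x-p\|-\ell(x)$ is convex (a norm minus an affine function).

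For the corollary, put $U:=\Rset^d\setminus A$; by assumption $U$ is open, convex, and contains $p$. If $U=\Rset^d$, i.e.\ $A=\emptyset$, then $\dom(p,A)=\Rset^d$ and there is nothing to prove; otherwise $U$ is a nonempty proper open convex set, hence equals the intersection of all open halfspaces containing it. Complementing, $A=\bigcup_{\alpha}H_\alpha$ for some family $(H_\alpha)$ of closed halfspaces, and $p\in U$ gives $p\notin H_\alpha$ for every $\alpha$. Since $\dist(x,A)=\inf_\alpha\dist(x,H_\alpha)$, the condition $\|x-p\|\le\dist(x,A)$ is equivalent to requiring $\|x-p\|\le\dist(x,H_\alpha)$ for all $\alpha$; that is, $\dom(p,A)=\bigcap_\alpha\dom(p,H_\alpha)$. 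By the first part each $\dom(p,H_\alpha)$ is convex, so the intersection is convex, as needed.

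The genuinely routine points are the distance-to-hyperplane formula and the standard fact that an open convex set is an intersection of open halfspaces. The step that should be carried out with a little care is the identity $\dom(p,H)=\{\,x:\|x-p\|-\ell(x)\le0\,\}$: one has to check both that the region avoids $H$ and that the norm inequality \emph{by itself} confines $x$ to the open complementary halfspace, since this is precisely what allows $\dist(\cdot,H)$ to be replaced by the globally affine function $\ell$ and the defining condition to be recognized as a single convex inequality.
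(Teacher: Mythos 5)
Your proof is correct, and for the second half it is identical in spirit to the paper's (express $A$ as a union of closed halfspaces and intersect the dominance regions). For the first half, both arguments ultimately rest on the same fact --- that $\dist(\cdot,H)$ restricted to $\R^d\setminus H$ is an affine function --- but you package it differently. The paper proves it geometrically: for $x\notin H$ it picks $x^*\in\bd H$ attaining the distance, and uses the observation that translating the displacement vector $x-x^*$ to any other point $y\notin H$ again realizes $\dist(y,H)$, hence $\dist\bigl(\tfrac{x+y}{2},H\bigr)=\tfrac12\bigl(\dist(x,H)+\dist(y,H)\bigr)$; it then concludes by a midpoint argument. You instead write the explicit formula $\dist(x,H)=\max\{0,(\varphi(x)-c)/\mu\}$ with $\mu$ the dual norm of $\varphi$, and then recognize $\dom(p,H)$ outright as the sublevel set $\{\,x:\|x-p\|-\ell(x)\le0\,\}$ of the convex function ``norm minus affine.'' Your version is arguably a touch cleaner --- it yields genuine convexity in one step without invoking midpoint convexity plus closedness, and it makes explicit that the norm inequality alone confines $x$ to $\R^d\setminus H$, a point the paper leaves implicit. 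Either way, the substance is the same and no gap is present.
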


\begin{proof}
Let $x \notin H$ be a point and let $x^*\in\bd H$ be
a point where $\dist (x,H)$, the distance of $x$ to $H$
measured by $\| \mathord\cdot \|$, is attained. If $y\not\in H$
is another point and $y^*\in \bd H$ is the point
such that the vectors $x-x^*$ and $y-y^*$ are parallel,
then $\|y-y^*\| = \dist (y, H)$; see Fig.~\ref{f:domhalf}.

\labepsfig{domhalf}{The dominance region of a point
against a halfspace. }

Now let $x,y\in \dom(p,H)$, let $x^*,y^*$ be as above,
set $z:=(x+y)/2$, and let $z^*$ be defined analogously
to $y^*$. Then we get $\dist (z, H) = \|z-z^*\|=
(\|x-x^*\|+\|y-y^*\|)/2=(\dist(x,H)+\dist(y,H))/2$.
From this $z\in \dom(p,H)$ is immediate,
since $\|p-z\|\le (\|p-x\|+\|p-y\|)/2\le
(\dist(x,H)+\dist(y,H))/2=\dist(z,H)$. This proves the first
part of the lemma.

The second part follows easily: $A$ can be expressed
as a union of closed halfspaces $H$,
and $\dom (p, A)$ is the intersection of
the convex sets $\dom(p,H)$.
\end{proof}

Now we prove a cone lemma, similar to Lemma~\ref{l:eucl-cone}:

\begin{lemma}[Cone lemma for rotund norms] \label{lemma:cone}
Let $\| \mathord\cdot \|$ be a rotund norm on $\R ^d$.
Suppose that an $n$-tuple $\PP$ of sites satisfies
$\varepsilon:=\min_{i\ne j} \dist (P _i, P _j) > 0$,
and $\RR$ and $\SS$ satisfy $\RR=\Dom\SS$ and $\SS=\Dom\RR$.
Then for every $s_0>0$ there is $\rho>0$ (also depending
on $\eps$ and on $\| \mathord\cdot \|$) such that the following holds:
If $a\in R_i$ with $r := \dist (a, P _i) \leq s _0$
and  $p\in P_i$ is a point attaining the distance $\dist (a, P _i)$,
then the set
\begin{equation*}
 K 
:=
 \conv \bigl( \{a\} \cup \cBall (p,\rho) \bigr)
\end{equation*}
is contained in $R_i$.
\end{lemma}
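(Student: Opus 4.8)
The obstacle is exactly the one the text flags just before the statement: for a general norm $\dom(p,q)$ need not be convex, so one cannot simply take convex hulls as in Lemma~\ref{l:eucl-cone}. The plan is to prove directly that $K\subseteq\dom(p,B)\subseteq R_i$, where $B:=\bigcup_{j\ne i}S_j$, by a quantitative estimate that balances two effects. Parametrising the segment $[a,p]$ by $\bar x:=(1-t)a+tp$, a point of $K$ ``lying over'' $\bar x$ is within $t\rho$ of $\bar x$, so $K$ is very thin near its apex $a$; on the other hand, rotundity guarantees that every region $\dom(p,b)$ has slack proportional to $t$ along this segment. Choosing $\rho$ small relative to the modulus of uniform convexity makes the slack win. (Throughout, $\|x-p\|\le\dist(x,B)$ forces $\dist(x,P_i)\le\dist(x,B)$, i.e.\ $x\in R_i$; and we may assume $n\ge2$, the case $n=1$ being trivial since then $R_1=\Rset^d$.)

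First I would collect preliminaries. By Observation~\ref{o:eps4ball}, $\|b-p\|\ge\tfrac\eps2$ for every $b\in B$, and $\cBall(p,\tfrac\eps4)\subseteq R_i$; and since $a\in R_i$ we have $r=\dist(a,P_i)\le\dist(a,B)\le\|a-b\|$ for every $b\in B$. Put $\delta_0:=\tfrac\eps{2s_0}$, let $\mu_0:=\mu(\tfrac{\delta_0}2)>0$ be the value furnished by uniform convexity, set $c_0:=\min(\tfrac{\delta_0}2,2\mu_0)>0$, and choose $\rho:=\tfrac\eps4\min\!\bigl(1,\tfrac{c_0}2\bigr)$. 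If $r\le\tfrac\eps4$, then every point of $K$ is at distance $\le\max(r,\rho)\le\tfrac\eps4$ from $p$, so $K\subseteq\cBall(p,\tfrac\eps4)\subseteq R_i$; hence assume from now on that $r>\tfrac\eps4$ (in particular $r>0$).

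Next, fix $x=(1-t)a+tq\in K$ with $q\in\cBall(p,\rho)$ and fix $b\in B$; the goal is $\|x-p\|\le\|x-b\|$. With $u:=(p-a)/r$ (a unit vector) and $w:=(b-a)/r$ (of norm $\ge1$, since $\|b-a\|\ge r$), a direct computation gives $\bar x-p=-(1-t)ru$ and $\bar x-b=r(tu-w)$, so $\|\bar x-p\|=(1-t)r$ and $\|\bar x-b\|=r\|w-tu\|$. The crucial inequality — the ``harder'' lemma, and the only place rotundity is used — is
$$\|w-tu\|\ \ge\ (1-t)+c_0t\qquad(0\le t\le1).$$
To prove it, note $\|u-w\|=\|p-b\|/r\ge\delta_0$. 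If $\|w\|\ge1+\tfrac{\delta_0}2$, then already $\|w-tu\|\ge\|w\|-t\ge(1-t)+\tfrac{\delta_0}2\ge(1-t)+c_0t$. Otherwise $\tilde w:=w/\|w\|$ has $\|w-\tilde w\|=\|w\|-1<\tfrac{\delta_0}2$, hence $\|u-\tilde w\|>\tfrac{\delta_0}2$, so $\|\tfrac12(u+\tilde w)\|\le1-\mu_0$ by uniform convexity; picking a linear functional $\phi$ with $\phi\le\|\mathord\cdot\|$ and $\phi(w)=\|w\|$ (so $\phi(\tilde w)=1$) gives $\phi(u)=2\phi\!\bigl(\tfrac12(u+\tilde w)\bigr)-1\le1-2\mu_0$, whence $\|w-tu\|\ge\phi(w)-t\phi(u)=\|w\|-t\phi(u)\ge(1-t)+2\mu_0t\ge(1-t)+c_0t$.

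Finally, since $\|x-\bar x\|=t\|q-p\|\le t\rho$, we get $\|x-p\|\le(1-t)r+t\rho$ and $\|x-b\|\ge r\|w-tu\|-t\rho\ge(1-t)r+c_0rt-t\rho$; subtracting, $\|x-b\|-\|x-p\|\ge t(c_0r-2\rho)\ge0$, using $r>\tfrac\eps4$ and $2\rho\le\tfrac{\eps c_0}4$. As $b\in B$ was arbitrary, $\|x-p\|\le\dist(x,B)$, so $x\in R_i$, and hence $K\subseteq R_i$. I expect the displayed inequality to be the main obstacle: it is a quantitative, rotundity-driven substitute for the (generally false) convexity of $\dom(p,b)$, and the dichotomy on $\|w\|$ together with the passage through a supporting functional of the norm is the technical heart of the argument.
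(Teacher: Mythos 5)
Your proof is correct, and it takes a genuinely different route from the paper's. The paper first proves Lemma~\ref{l:convcompl} (that $\dom(p,A)$ is convex whenever $\R^d\setminus A$ is convex and contains $p$), and then manufactures a convex open set $C$ — the interior of $\conv(\cBall(a,r)\cup\cBall(p,2\rho))$ — sandwiched between $K$ and the complement of $\bigcup_{j\ne i}S_j$; uniform convexity enters only through the well-known ``small caps'' estimate that keeps $C$ inside $\cBall(a,r)\cup\cBall(p,\eps/2)$. You bypass Lemma~\ref{l:convcompl} entirely and prove $K\subseteq\dom\bigl(p,\bigcup_{j\ne i}S_j\bigr)$ by a pointwise estimate: parametrise $K$ along the segment $[a,p]$, reduce to the normalized inequality $\|w-tu\|\ge(1-t)+c_0t$ for $\|u\|=1$, $\|w\|\ge1$, $\|u-w\|\ge\delta_0$, and prove that via a dichotomy on $\|w\|$ plus a supporting functional of the unit ball; uniform convexity enters analytically through the bound $\phi(u)\le1-2\mu_0$. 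I checked the constant bookkeeping ($\delta_0=\eps/2s_0$, $\mu_0=\mu(\delta_0/2)$, $c_0=\min(\delta_0/2,2\mu_0)$, $\rho=\frac\eps4\min(1,c_0/2)$) and the final inequality $t(c_0r-2\rho)\ge0$ for $r>\eps/4$; it all closes. What your approach buys is a self-contained, fully quantitative proof with explicit constants and no detour through halfspace-dominance convexity; what the paper's approach buys is a shorter argument that reuses Lemma~\ref{l:convcompl} and a standard geometric fact about uniformly convex balls, at the cost of a somewhat delicate set-inclusion argument ($C\subseteq D$) and a non-quantitative $\rho$.
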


\begin{proof} As in the Euclidean case,
we begin by observing that $a\in \dom(p,\bigcup_{j\ne i}S_j)$ and also
$\cBall (p,\frac\eps4) \subseteq \dom(p,\bigcup_{j\ne i}S_j)$
by Observation~\ref{o:eps4ball}.
Thus, the set
$D := \cBall (a,r) \cup \cBall (p, \frac\eps2)$ is contained in
the closure of $\Rset ^d\setminus \bigcup_{j\ne i}S_j$.
We now want to find a open convex subset  $C \subseteq D$ such that
$a$ and $B(p,\rho)$ are contained in $\dom (p, \Rset ^d \setminus C)$,
since the latter region is convex by Lemma~\ref{l:convcompl}
and thus it contains $K$ as well.

We let $C$ be the interior of $\conv (B (a, r) \cup B (p, 2 \rho))$ 
with $\rho$ sufficiently small 
(the restrictions on it will be apparent from the proof below);
see Fig.~\ref{f:twoballs}.
It is clear that $\{a\}\cup  B(p,\rho)\subseteq
\dom(p,\Rset ^d\setminus C)$, and so it remains to prove
$C\subseteq D$.

\labepsfig{twoballs}{The sets $C$ (shaded) and $D$.}

To this end, it is sufficient to prove the following:
If $B := \cBall (0, 1)$ is the unit ball of $\| \mathord\cdot \|$ and $\eta>0$
is given, then there exists $\delta>0$ such that for
every $x\in\Rset ^d$ with $\|x\| \le 1 + \delta$, the
``cap'' $\conv(B\cup\{x\})\setminus B$ has diameter
at most $\eta$. This is a well-known and easily proved
property of uniformly convex norms. (Proof sketch:
If $x$ with $\|x\|=1+\delta$ has a cap of large diameter,
then there is $z$ of norm $1$ and half of the diameter
away from $x$
such that the line $xz$ avoids
the interior of $B$. Let $y$ be the other intersection of
this line with $\bd \cBall (0,1+\delta)$---then $xy$ is a long
segment that cuts in $\cBall (0,1+\delta)$ into depth only $\delta$.)
\end{proof}

\begin{proof}[Proof of Theorem~\ref{t:norms}] The overall strategy
of the proof is exactly as for Theorem~\ref{t:euclcase}
(see Section~\ref{s:euclcase}).
The constant $\alpha$ in \ref{enumi: length decreases}
may also depend on the considered norm $\| \mathord\cdot \|$. 
This quantification also needs to be added in 
the appropriate version of Lemma~\ref{l:step}.

In the proof of that lemma, the first place where we use a property
not shared by all norms is below (\ref{e:delta'}); we need that
the triangle inequality may hold with equality only for collinear
points---this remains true for all \emph{rotund} norms.

Then we proceed as in the Euclidean case, introducing the
the cone $K=\conv(\{a\}\cup B(p,\rho))$ as in
Lemma~\ref{lemma:cone}. There is some $\gamma>0$,
depending on $\eps$, $s_0$, and the norm $\| \mathord\cdot \|$, 
such that the appropriate Euclidean cone with opening angle
$\gamma$ is contained in $K$. (Here and in the sequel
we implicitly use the fact that every norm on $\R^d$ is
between two constant multiples of the Euclidean norm,
which is well known and immediate by compactness.) 

We define the rays $k$ and $\ell$, again following the Euclidean proof.
For the next step, we need that, since the angle of these rays
is bounded away from the straight angle, at least one of
$k,\ell$ cuts a segment of a significant length $\beta$
from the ball $\cBall(b',r)$. It is easy to see that this
property follows from the \emph{smoothness} of the norm.
The rest of the proof goes through unchanged.
\end{proof}

\section{Non-uniqueness examples}\label{s:nonuniq}

As we saw in the introduction, two point sites with the
same $x$-coordinate have at least two zone diagrams under
the $\ell_1$ metric. Here we show that only the non-smoothness
(sharp corners) of the $\ell_1$ unit ball is essential for this
example, while the straight edges can be replaced by curved ones.

\begin{prop}\label{p:}
 There exists a rotund norm in the plane, arbitrarily close
to the $\ell_1$ norm, such that two distinct point sites
with the same $x$-coordinate have (at least) two different zone diagrams.
\end{prop}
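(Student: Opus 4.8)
The plan is to keep the four corners of the $\ell_1$ unit ball and merely round off its straight edges, then check that the non-uniqueness survives. First I would recall, and make fully explicit, the two distinct zone diagrams $\RR=(R_0,R_1)$ and $\RR'=(R'_0,R'_1)$ behind Figure~\ref{f:l1nonunique}, for the point sites $P_0=\{(0,0)\}$ and $P_1=\{(0,3)\}$: their regions are (unbounded) polygons with edge slopes among $0,\pm1,\infty$ only, each diagram is symmetric under $x\mapsto-x$, and the isometric involution $\sigma\colon(x,y)\mapsto(x,3-y)$ (which swaps the two sites) relates the two diagrams and already witnesses $\RR\neq\RR'$ at some point of the $y$-axis. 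For each diagram I would record the finitely many distance (in)equalities of the form $\dist(z,P_i)\le\dist(z,\bigcup_{j\ne i}R_j)$ (and its reverse on the complements) that certify it is a fixed point of $\Dom$, and classify the distances to regions that appear in them: by the $\ell_1$ distance formula and the listed slopes, most are attained by travelling in a pure coordinate direction — that is, at one of the corners $(\pm1,0),(0,\pm1)$ of the $\ell_1$ ball — and the remaining ones are attained at a vertex of some region.

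Next I would build the approximating norms. For a large parameter $\varrho$ let $N_\varrho$ be the norm whose unit ball is the $\ell_1$ diamond with each of its four open edges replaced by the circular arc of radius $\varrho$ through the two endpoints of that edge, bowing slightly outward; we keep the four vertices and the full dihedral symmetry, in particular the symmetry $(x_1,x_2)\mapsto(\pm x_1,\pm x_2)$. Then $N_\varrho$ is rotund — its unit sphere is a union of four strictly convex arcs and contains no segment — yet still non-smooth, the two arcs at each of $(\pm1,0),(0,\pm1)$ forming a genuine corner; and $N_\varrho\to\|\mathord\cdot\|_1$ uniformly on the unit sphere as $\varrho\to\infty$, so these norms are arbitrarily close to $\ell_1$. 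By the coordinate symmetries, the $y$-axis reflection and $\sigma$ are $N_\varrho$-isometries, so $N_\varrho$-zone diagrams again come in $\sigma$-pairs.

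The heart of the proof is to show that for $\varrho$ large, $N_\varrho$ has two distinct zone diagrams. I would exhibit explicit candidates $\widetilde\RR_\varrho,\widetilde\RR'_\varrho$ obtained from $\RR,\RR'$ by bending only the slope-$\pm1$ region edges into arcs that are $O(1/\varrho)$-close to the originals, and then verify $\widetilde\RR_\varrho=\Dom_{N_\varrho}\widetilde\RR_\varrho$ and similarly for $\widetilde\RR'_\varrho$. That $\widetilde\RR_\varrho\neq\widetilde\RR'_\varrho$ is the easy part: at the point of the $y$-axis where $\RR$ and $\RR'$ differ, all the relevant distances are purely vertical and hence, because $N_\varrho$ coincides with $\|\mathord\cdot\|_1$ at the corner $(0,\pm1)$, unchanged — this is precisely where the non-smoothness does the work, and it is why rotundity alone does not force uniqueness. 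The step I expect to be genuinely delicate is the fixed-point verification: since the regions are two-dimensional and unbounded, the nearest point of a region to a query point need not lie in a coordinate direction, and distances attained at vertices of the $\ell_1$ regions can shrink by the full $1-O(1/\varrho)$ factor, so one must confirm that after rounding those vertices every certifying inequality still goes the right way. Concretely, the main technical work will be to quantify the ``slack'' — to bound, uniformly in the query point, how far from equality the certifying inequalities stay in the $\ell_1$ diagrams $\RR$ and $\RR'$ — so that an $O(1/\varrho)$ perturbation of the norm and of the region boundaries cannot flip any of them.
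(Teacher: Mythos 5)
Your approach diverges from the paper's at a point that the paper itself flags as critical, and the divergence is not a cosmetic one.

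You round the $\ell_1$ ball \emph{symmetrically} — you explicitly keep ``the full dihedral symmetry, in particular the symmetry $(x_1,x_2)\mapsto(\pm x_1,\pm x_2)$'', which includes $(x_1,x_2)\mapsto(x_2,x_1)$. The paper's construction instead inflates the ball \emph{asymmetrically} (horizontal vs.\ vertical weights differ: $\|(x,y)\|_{(1)}=\delta\sqrt{\alpha^2x^2+y^2}+(1-\alpha\delta)|x|+(1-\delta)|y|$), and the paper explains exactly why: after a rotund inflation the fat $\ell_1$ quadrant-bisector of $(-1,1)$ and $(1,-1)$ collapses to a ray, and \emph{``if the inflation were symmetric, we would get straight rays with slope $1$ in the bisector, but with an asymmetric inflation, we can get a (positive) slope as small as we wish.''} Indeed, with your $N_\varrho$, the $(a,b)\leftrightarrow(b,a)$ symmetry forces the bisector ray in the quadrant $x,y\ge 1$ to be exactly $\{(x,x):x\ge1\}$. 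The paper's entire extension of the zone diagram outside the vertical strip $V=\{|x|\le 2\}$ hinges on the bisector of $p^+$ and the corner $(2,-1)$ being nearly flat, which in turn lets $\tilde R^-$ lie below nearly vertical rays so that every point in the upper halfplane to the right of $V$ has its nearest point of $\tilde R^-$ at the corner $(2,-1)$. With slope-$1$ bisectors this last claim is in serious doubt: $\tilde R^-$ would protrude up-and-right along a slope-$\approx1$ curve and could attract far-right query points before the corner does, so $\dom(p^+,\tilde R^-)=\tilde S^+$ is no longer clear. You do not supply any replacement for this step.

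There is a second, related gap. Your plan to perturb the two $\ell_1$ diagrams $\RR,\RR'$ and check $\widetilde\RR_\varrho=\Dom_{N_\varrho}\widetilde\RR_\varrho$ treats the fixed-point verification as a quantitative stability question (``bound the slack so an $O(1/\varrho)$ perturbation cannot flip any inequality''). But the two $\ell_1$ diagrams differ precisely inside the fat bisector sets, where the relevant $\ell_1$ inequalities hold with \emph{equality}, so there is no slack to speak of; the rounding determines on which side of the collapsed bisector the boundary must lie, and for a symmetric rounding both perturbed candidates would be pushed onto the same slope-$1$ curve, threatening to collapse $\widetilde\RR_\varrho$ and $\widetilde\RR'_\varrho$ into one diagram. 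Note also that closeness to $\ell_1$ cannot by itself guarantee non-uniqueness: Theorem~\ref{theorem: smooth two singleton plane} gives uniqueness for any smooth planar norm, and smooth norms can be taken arbitrarily close to $\ell_1$. So you need a genuinely structural argument, not a perturbation bound, and the paper's structural argument is exactly the asymmetry plus the explicit strip-then-extend construction via Lemma~\ref{l:(1)norm} and Fig.~\ref{f:exttendRS}. That piece is missing from your proposal.
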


The appropriate norm is not difficult to describe, but proving the
non-uniqueness of the zone diagram is more demanding, since
it seems hard to find an explicit description of a zone diagram
for non-polygonal norms.

Informally, we construct the desired norm by slightly ``inflating''
the unit ball of the planar $\ell_1$ norm, so that the edges
bulge out and the norm becomes rotund. It is important
that the inflation is asymmetric, as is schematically
indicated in Fig.~\ref{f:bulgefat} (in the ``real'' example
we inflate much less). 
\labepsfig{bulgefat}{A schematic illustration of the unit ball of $\| \mathord\cdot \|_{(1)}$.}
We will denote the resulting norm by $\| \mathord\cdot \|_{(1)}$; 
the subscript should remind of ``inflated $\ell_1$'' graphically.

\labepsfig{bisects}{The bisector of $p$ and $q$ under the $\ell_1$ norm
and under $\| \mathord\cdot \|_{(1)}$ (schematic).}

To explain the purpose of the asymmetry in our example,
we consider the bisector of the points $p=(-1,1)$ and $q=(1,-1)$,
i.e., the set of all points equidistant to $p$ and $q$.
For the $\ell_1$ norm, the bisector is ``fat'', as shown in
Fig.~\ref{f:bisects} left---it consists of a segment and
two quadrants. By a small inflation, which makes the norm
rotund, the middle segment of
the bisector is changed only very slightly, but the 
``ambiguity'' of the $\ell_1$ bisector in the quadrants
is ``resolved'', and the quadrants collapse to
(possibly curved) rays.
Now if the inflation were symmetric, 
we would get
straight rays with slope $1$ in the bisector, but with an asymmetric inflation,
we can get a (positive) slope as small as we wish.

In order to establish the required properties of the bisector
formally, a safe route (if perhaps not the most conceptual one)
is to describe $\| \mathord\cdot \|_{(1)}$ analytically. The rays
of the bisector will be slightly curved rather than straight, but 
for the zone diagram construction this will do as well. 

\begin{lemma}\label{l:(1)norm}
For every $\eps>0$ there exists a rotund norm $\| \mathord\cdot \|_{(1)}$ in the plane,
whose unit ball contains the $\ell_1$ unit ball and is contained
in the octagon as in Fig.~\ref{f:outeroctagon} left, such that the
portion of the bisector of the points $p=(-1,1)$ and $q=(1,-1)$
lying in the quadrant $\{\, (x,y) : x, y \geq 1 \,\}$ is an $x$-monotone
curve lying below the line $y=\eps(x-1)+1$ (Fig.~\ref{f:outeroctagon} right).
\labepsfig{outeroctagon}{The conditions in Lemma~\ref{l:(1)norm}. }
\end{lemma}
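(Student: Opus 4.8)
The plan is to construct $\| \mathord\cdot \|_{(1)}$ by writing down an explicit convex function whose sublevel set at $1$ is the desired unit ball, verifying the stated inclusions by inspection, and then computing the bisector directly from the formula. First I would set up the unit ball as a perturbation of the $\ell_1$ ball: in the first quadrant the $\ell_1$ ball is bounded by the segment $x+y=1$, and I replace it by a slightly bulging strictly convex arc, say of the form $x + y + c\,\phi(x,y) = 1$ for a small constant $c$ and a suitable smooth strictly convex ``bump'' $\phi$ that vanishes at the vertices $(1,0)$ and $(0,1)$; the asymmetry is introduced by making the bump lopsided (heavier toward the $x$-axis than the $y$-axis), which is exactly what tilts the resulting bisector rays. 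In the other three quadrants I would define the boundary by the symmetry $\|(x,y)\|_{(1)} = \|(-x,y)\|_{(1)}$ etc.\ only where needed; in fact it is cleanest to impose full central symmetry $\|{-v}\| = \|v\|$ (required of any norm) together with reflection in the $x$-axis, but \emph{not} reflection in the line $y=x$, since breaking that last symmetry is the whole point. Taking $c$ small enough keeps the unit ball inside the prescribed octagon and (being a small perturbation of a convex body) keeps it convex; strict convexity of the arcs gives rotundity.

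Next I would analyze the bisector of $p=(-1,1)$ and $q=(1,-1)$. By the central symmetry of the norm about the origin, $\|x-p\| = \|x-q\|$ becomes, after the substitution $x \mapsto -x$, a symmetry relating the two distances, so the origin lies on the bisector and the bisector is symmetric under $x \mapsto -x$; it therefore suffices to understand the piece in the region $x \ge 0$, and in particular the piece in the quadrant $\{x,y \ge 1\}$. In that quadrant a point $x=(u,v)$ has $x-q = (u-1, v+1)$ with both coordinates $\ge 0$, so $\|x-q\|_{(1)}$ is governed by the first-quadrant arc of the unit ball, while $x - p = (u+1, v-1)$, and for $v$ near $1$ the second coordinate is small, so this vector points into (a neighborhood of) the positive $x$-axis direction, where the unit ball is smooth. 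I would write the defining equation $\|x-p\|_{(1)} = \|x-q\|_{(1)}$ using the explicit formula, expand to first order in the small parameter $c$, and check that the zeroth-order term recovers the $\ell_1$ bisector (the quadrant, as in Fig.~\ref{f:bisects} left) while the first-order term, thanks to rotundity, pins the solution down to a curve; the asymmetry of $\phi$ makes the leading slope of that curve $O(c)$ rather than $1$, so choosing $c$ proportional to (a fixed multiple of) the target $\eps$ forces the curve below $y = \eps(x-1)+1$. I would also check $x$-monotonicity of this curve, either by an implicit-function/monotonicity argument on the defining equation or, more robustly, by noting that the bisector of two points under a rotund norm cannot contain a vertical segment and using the sign of the relevant partial derivatives in this quadrant.

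The main obstacle I anticipate is the bisector computation in the quadrant $\{x, y \ge 1\}$: near the $\ell_1$ bisector the equation $\|x-p\| = \|x-q\|$ is \emph{degenerate} at order $c^0$ (it holds identically on a two-dimensional region for the $\ell_1$ norm), so the location of the true bisector is a genuinely first-order effect and one must control the perturbation carefully and uniformly. To handle this cleanly I would choose $\phi$ so that $\|x-q\|_{(1)}$ has a clean closed form on the relevant range — e.g.\ arrange the first-quadrant arc to be a piece of an ellipse or a shallow parabola — while for $\|x-p\|_{(1)}$ I only need an upper estimate valid when the vertical component $v-1$ is small and positive, which follows from smoothness of the norm near the $x$-direction (the support function / gradient there is nearly horizontal). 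Combining ``$\|x-q\|$ exact'' with ``$\|x-p\| \le$ (first-quadrant arc value) $+ O((v-1)^2)$'' should yield an inequality forcing any bisector point in the quadrant to satisfy $v - 1 \le \eps(u-1)$, which is precisely the claim; the remaining routine work is to fix the constants so that simultaneously the unit ball sits inside the octagon, the norm is rotund, and the $\eps$-bound holds.
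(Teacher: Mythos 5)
Your overall strategy — write down an explicit rotund norm near $\ell_1$, compute the bisector directly from the formula, and bound the slope — matches the paper's. You also correctly flag the central technical issue: in the quadrant $\{x,y\geq 1\}$ the $\ell_1$ bisector equation is identically satisfied (that whole quadrant is equidistant from $p$ and $q$ in $\ell_1$), so the location of the perturbed bisector is a leading-order effect of the perturbation.

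But your plan for closing that gap contains a genuine error. You propose to take $\phi$ fixed (lopsided but fixed) and make the perturbation size $c$ small, claiming the bisector's slope is $O(c)$. This cannot work: once you write $\|(u,v)\|_{(1)} = u + v + (\text{perturbation of size }c)$ for $u,v\geq 0$ and plug into $\|x-p\|_{(1)}=\|x-q\|_{(1)}$ with $x$ in the quadrant, the $\ell_1$ parts cancel identically (both equal $x+y$), and the remaining equation to leading order is
\begin{equation*}
\phi\Bigl(\tfrac{x+1}{x+y},\,\tfrac{y-1}{x+y}\Bigr)=\phi\Bigl(\tfrac{x-1}{x+y},\,\tfrac{y+1}{x+y}\Bigr),
\end{equation*}
which does not involve $c$ at all. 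So the bisector curve and its slope are, to leading order, independent of $c$; shrinking $c$ cannot push the slope below $\eps$. What controls the slope is the \emph{shape} of the asymmetry, not the size of the perturbation. This is exactly why the paper introduces two parameters: $\delta$ (perturbation size, analogous to your $c$) and $\alpha$ (anisotropy). The paper's bisector computation shows that $\delta$ drops out of the equation entirely, and the slope is then bounded by $C\sqrt\alpha$, so one shrinks $\alpha$ to meet the $\eps$ bound while $\delta$ is free to be chosen small for the octagon/containment conditions. To repair your argument you would need $\phi$ to be a one-parameter family $\phi_\alpha$ whose asymmetry is tuned toward the $x$-axis as $\alpha\to 0$, not a fixed $\phi$ scaled by $c$.

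Two smaller points worth noting. First, the claim that rotundity alone forces the bisector to have no vertical segment is not sufficient for $x$-monotonicity (a curve with no vertical segments can still turn back); you would need to use the sign of the partial derivatives, or compute the curve explicitly as the paper does. Second, the estimate must hold uniformly over the unbounded quadrant; the paper's closed form shows the slope behaves like $\alpha x$ near $x=0$ but like $\sqrt\alpha$ for large $x$, so a naive local/perturbative bound would understate the slope far out. The paper's explicit formula handles this cleanly; in your scheme, the choice of $\phi$ must be made with this uniformity in mind.
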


See Appendix~\ref{appendix section: inflated norm lemma} for a proof. 

\begin{proof}[Proof of Proposition~\ref{p:}]
We show that the zone diagram of the sites $p^-=(0,-1)$
and $p^+=(0,+1)$ under the norm $\| \mathord\cdot \|_{(1)}$ as 
in the lemma, with $\eps$ sufficiently small, is not unique.

\looseness=-1 
First we consider the zone diagram only inside the vertical strip
\begin{equation*}
V:=\{\, (x,y)\in\R^2:x\in [-2, 2] \,\}.
\end{equation*}
Let $R_0^+$ be the region
as in 
Fig.~\ref{f:twozono}, 
i.e., the part of the region
of $p^-$ within $V$ in an $\ell_1$ zone diagram of $p^-,p^+$. 
Let $S_0^+$ be obtained by pulling the bottom vertex of $R_0^+$
downward by $\eta$ (which is another small positive parameter),
 and let $R_0^-,S_0^-$ be the reflections
of $R_0^+,S_0^+$ by the $x$-axis.

\labepsfig{twozono}{The regions $R_0^+,S_0^+,R_0^-,S_0^-$ in the vertical strip~$V$.}

Let us consider the region $\dom(p^-,R_0^+)$ inside $V$ (distances
measured by our norm $\| \mathord\cdot \|_{(1)}$). For every point $x\in V$ below $R_0^+$,
the $\| \mathord\cdot \|_{(1)}$-distance to $R_0^+$ coincides with the $\ell_1$
distance, which is simply the length of the vertical segment from $x$
to $\bd R_0^+$. From this it is clear that 
$\dom(p^-,R_0^+)\supseteq R_0^-$ (since $R_0^-$ is the dominance
region of $p^-$ against $R_0^+$ in the $\ell_1$ metric,
and $\| \mathord\cdot \|_{(1)}\le \| \mathord\cdot \|_1$). Moreover, it's easy to check
that for $\eps$ (the parameter controlling the
choice of $\| \mathord\cdot \|_{(1)}$) sufficiently small, we also have
$\dom(p^-,R_0^+)\subseteq S_0^-$.

Thus, we have $R_0^-\subseteq \dom(p^-,R_0^+)\subseteq S_0^-$,
and by the vertical symmetry we also get 
$R_0^+\subseteq \dom(p^+,R_0^-)\subseteq S_0^+$.
Arguing as in either of the proofs of Theorem~\ref{t:doublezone}, 
we get that there exist regions $R^+,R^-,S^+,S^-$,
where $R^-$ is the reflection of $R^+$, $S^-$ is the reflection
of $S^+$, such that $R_0^-\subseteq R^+\subseteq S^+\subseteq S^+$,
and $(R^-,S^+)$ is a zone diagram of $(p^-,p^+)$
(and so is $(S^-,R^+)$, but we actually have
$R^+=S^+$, although we will neither need this nor prove it).

All of this refers to the vertical strip $V$ (so, formally,
the metric space in these arguments is $V$ with the $\| \mathord\cdot \|_{(1)}$
metric). Now we move on to the full plane $\R^2$, and we let $\tilde S^+$
be the region consisting of $S^+$ plus two parts of the upper
halfplane outside $V$ as in 
Fig.~\ref{f:exttendRS}: 
The right part is delimited by a part of the bisector of
$p^+$ and $(2,-1)$ (drawn thick), and the left part
by a part of the  bisector of $p^+$ and $(-2,-1)$.

Now we set $\tilde R^-:=\dom(p^-,\tilde S^+)$.
The distance of points inside $V\setminus S^+$ to $\tilde S^+$
is still the vertical distance, i.e., the same as
the distance to $S^+$, and so $\tilde R^-\cap V=R^-$.
For the part of $\tilde R^-$ outside $V$, we don't need
an exact description---it is sufficient that it lies below
the dashed rays in 
Fig.~\ref{f:exttendRS}
(using the property of the bisectors as
in Lemma~\ref{l:(1)norm}, one can see that these rays
can be taken as steep as desired, by setting $\eps$
sufficiently small). From this we can see that
for every point of the upper halfplane on the right of $V$,
the nearest point of $\tilde R^-$ is the corner $(2,-1)$.

\labepsfig{exttendRS}{The region $\tilde S^+$ defined using bisectors,
and a region containing~$\tilde R^-$.}

Therefore, $\dom(p^+,\tilde R^-)=\tilde S^+$, and
hence $(\tilde R^-,\tilde S^+)$ is a zone diagram 
of $(p^-,p^+)$. But the mirror reflection of this zone
diagram about the $x$-axis yields another, different zone diagram.
\end{proof}

\subsection*{Acknowledgements}
We are grateful to Tetsuo Asano for valuable discussions 
including those on non-uniqueness examples for convex polygonal distances.
We also express our gratitude to Daniel Reem for 
careful reading and useful suggestions on the manuscript. 
Finally, we remark that the warm comments from the audience 
of our preliminary announcement of partial results at EuroCG~2009 
encouraged us to work further. 

\makeatletter\@topnum\z@\makeatother

\appendix

\section{Proof of Theorem~\ref{theorem: smooth two singleton plane}}
\label{s:smoothonly}

Proposition~\ref{p:} showed that 
the assumption of smoothness in Theorem~\ref{t:norms} cannot be dropped, 
even for the simplest case of two singleton sites in the plane. 
Theorem~\ref{theorem: smooth two singleton plane}, 
which we will prove here, 
states that the rotundity assumption can be dropped in this special case. 

Smoothness of the norm means that 
a metric ball has a unique supporting halfspace at every point in its surface. 
Thus, for a nonzero vector~$a$, 
we can define $\support ^{> 0} _a$ to be
the open halfspace
that touches (but not intersects)
the ball $\cBall (-a, \lVert a \rVert)$ at the origin. 
We write $\support ^{\leq 0} _a = \Rset ^d \setminus \support ^{> 0} _a$ 
and $\support ^{\geq 0} _a = \support ^{\leq 0} _{-a}$. 
\begin{figure}
\begin{center}
\includegraphics[clip,scale=1.1]{./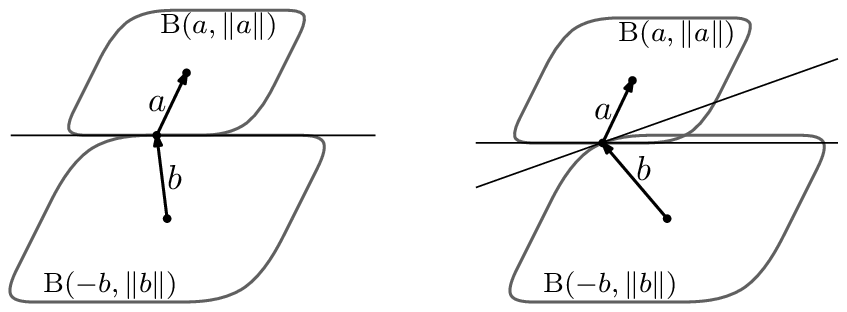}
\caption{$\lVert a + b \rVert = \lVert a \rVert + \lVert b \rVert$ if and only if
         $a \sim b$ 
         (equation \eqref{equation: equivalence} with $m = 2$).}
\label{figure: balls intersect}
\end{center}
\end{figure}
For nonzero vectors $a$ and $b$, 
define $a \sim b$ when $
\support ^{> 0} _a = \support ^{> 0} _b
$.  Then $\sim$ is an equivalence relation. 
It is easy to see (Fig.~\ref{figure: balls intersect}) 
that for nonzero vectors $a _1$, \ldots, $a _m$, we have
\begin{equation}
\label{equation: equivalence}
\lVert a _1 + \dots + a _m \rVert = \lVert a _1 \rVert + \dots + \lVert a _m \rVert
\qquad
\text{if and only if}
\qquad
a _1 \sim \dots \sim a _m. 
\end{equation}

\begin{lemma}
\label{lemma: almost straight}
Let $\lVert \mathord\cdot \rVert$ be a smooth norm on $\Rset ^d$. 
Then there are positive numbers $\alpha$ and $\beta$ such that 
for any unit vectors $u$, $v$ with 
$\lVert u + v \rVert > 2 - \beta$, 
we have $\lVert u - \alpha v \rVert \leq 1$. 
\end{lemma}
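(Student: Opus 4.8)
The plan is to argue by contradiction using compactness of the unit sphere, with smoothness entering at a single point. Suppose the lemma fails; then, taking $\alpha = \beta = 1/n$ for $n = 1, 2, \dots$, we obtain unit vectors $u _n$ and $v _n$ with $\lVert u _n + v _n \rVert > 2 - 1/n$ while $\lVert u _n - \tfrac 1n v _n \rVert > 1$. Since we are in a finite-dimensional space, the unit sphere of $\lVert \mathord\cdot \rVert$ is compact, so we may pass to a subsequence along which $u _n \to u$ and $v _n \to v$; then $\lVert u \rVert = \lVert v \rVert = 1$, and letting $n \to \infty$ in $2 - 1/n < \lVert u _n + v _n \rVert \le 2$ yields $\lVert u + v \rVert = \lVert u \rVert + \lVert v \rVert = 2$.

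Smoothness is used only to extract one fact about this limit. By smoothness the unit ball $\cBall (0, 1)$ has a \emph{unique} supporting functional $\phi$ at $u$, that is, a linear functional with $\phi (u) = 1$ and $\phi (x) \le \lVert x \rVert$ for every $x$. Since $\lVert u + v \rVert = \lVert u \rVert + \lVert v \rVert$, equation~\eqref{equation: equivalence} gives $u \sim v$, so $u$ and $v$ share a common supporting functional of $\cBall (0, 1)$; by uniqueness this functional is $\phi$, and hence $\phi (v) = 1$. (This is exactly the step that fails for non-smooth norms such as $\lVert \mathord\cdot \rVert _1$, as it must.)

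To finish, I would derive a contradiction from $\lVert u _n - \tfrac 1n v _n \rVert > 1$ and $\phi(v) = 1$. For $n \ge 2$ the vector $w _n := u _n - \tfrac 1n v _n$ is nonzero (its norm is at least $1 - 1/n$), so one can choose a norm-one functional $\psi _n$ with $\psi _n (w _n) = \lVert w _n \rVert$, i.e.\ $\lVert w _n \rVert = \psi _n (u _n) - \tfrac 1n \psi _n (v _n)$. Passing to a further subsequence, $\psi _n \to \psi$ with $\lVert \psi \rVert _* = 1$; since $w _n / \lVert w _n \rVert \to u$, we get $\psi (u) = 1$, so $\psi$ is a supporting functional of $\cBall (0, 1)$ at $u$, whence $\psi = \phi$ by uniqueness and $\psi (v) = 1$. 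Therefore $\psi _n (v _n) \to 1$, so $\psi _n (v _n) > \tfrac 12$ for all large $n$; combined with $\psi _n (u _n) \le \lVert \psi _n \rVert _* \, \lVert u _n \rVert = 1$ this gives
\[
\lVert w _n \rVert \;=\; \psi _n (u _n) - \tfrac 1n \psi _n (v _n) \;\le\; 1 - \tfrac 1{2n} \;<\; 1
\]
for all large $n$, contradicting $\lVert w _n \rVert > 1$.

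The main thing to get right is the middle paragraph: deducing $\phi(v) = 1$ from $\lVert u + v \rVert = 2$ and smoothness, together with its companion, that a subsequential limit of norm-one supporting functionals at points converging to $u$ is again the unique supporting functional at $u$. Both are standard consequences of the uniqueness of supporting hyperplanes for smooth norms, and beyond them the argument uses only compactness and the triangle inequality. (If an explicit $\alpha$ and $\beta$ were wanted, one could instead run the same two facts quantitatively through the uniform rotundity of the dual unit ball, which is rotund precisely because $\lVert \mathord\cdot \rVert$ is smooth.)
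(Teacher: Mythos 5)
Your proof is correct, and it takes a genuinely different route from the paper's. The paper argues directly: it sets $\sigma$ to be the minimum over unit vectors $u$ of the (Euclidean) angle between $u$ and the tangent hyperplane $\support^{\leq 0}_u$ at $u$ (this is positive and continuous in $u$ precisely by smoothness), then splits unit pairs $(u,v)$ into two compact families according to whether $v$ makes angle at least or at most $\sigma/2$ with that hyperplane, and defines $\alpha$ and $\beta$ explicitly as the positive minima of the continuous functions $\alpha_{u,v}$ (chord length of $\cBall(0,1)$ along $u + \Rset v$) and $\beta_{u,v} = 2 - \lVert u+v \rVert$ over the respective families. You instead argue by contradiction: extracting a limiting pair $(u,v)$ with $\lVert u+v \rVert = 2$, using smoothness (via \eqref{equation: equivalence}) to conclude that the unique supporting functional $\phi$ at $u$ also satisfies $\phi(v) = 1$, and then running a second compactness argument on the dual supporting functionals $\psi_n$ at $w_n/\lVert w_n\rVert$ to produce the quantitative contradiction $\lVert w_n \rVert \leq 1 - \tfrac{1}{2n}$. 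Both proofs rest on compactness of the sphere together with uniqueness (and the resulting continuity/closed-graph property) of supporting hyperplanes for a smooth norm, but the paper's version is constructive and leans on an auxiliary Euclidean angle, whereas yours works intrinsically with the norm and its dual at the cost of explicitness of $\alpha$ and $\beta$. Your closing remark that a quantitative version could be obtained from uniform rotundity of the dual ball is sound and is close in spirit to the paper's direct minimization.
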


\begin{proof}
The angle~$\sigma _u$ between a unit vector~$u$ and 
$\support ^{\leq 0} _u$ is 
a continuous function of $u$, 
and hence attains a positive minimum~$\sigma$. 
Let $\support ^{\geq \sigma / 2} _u$ (and $\support ^{\leq \sigma / 2} _u$) be 
the set of vectors (including $0$) 
that make an angle $\geq \sigma / 2$ (and $\leq \sigma / 2$)
with $\support ^{\leq 0} _u$ (Fig.~\ref{figure: gamma-cone}). 
\begin{figure}
\begin{center}
\includegraphics[clip,scale=1.2]{./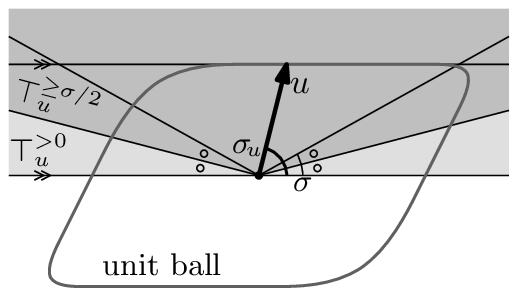}
\caption{$\support ^{\geq \sigma / 2} _u$ is the set of vectors that are significantly 
         closer to $u$ than to $-u$.}
\label{figure: gamma-cone}
\end{center}
\end{figure}
We find the desired $\alpha$ and $\beta$ as follows. 

For unit vectors $u$ and $v$ with $v \in \support ^{\geq \sigma / 2} _u$, 
let $\alpha _{u, v}$ be the length of the segment that the unit ball 
cuts out from the line $u + \Rset v$. 
In other words, $\alpha _{u, v}$ is the unique positive number such that 
$\lVert u - \alpha _{u, v} v \rVert = 1$. 
Then $\alpha _{u, v}$ is continuous in $u$ and $v$, 
and thus attains a positive minimum $\alpha$. 

For unit vectors $u$ and $v$ with $v \in \support ^{\leq \sigma / 2} _u$, 
let $\beta _{u, v} = 2 - \lVert u + v \rVert$. 
Then $\beta _{u, v}$ is positive and continuous in $u$ and $v$, 
and thus attains a positive minimum $\beta$. 

Since $\support ^{\geq \sigma / 2}$ and $\support ^{\leq \sigma / 2}$
covers the whole space, 
$\alpha$ and $\beta$ have the stated property. 
\end{proof}

\begin{lemma}
\label{lemma: balls}
Let $\lVert \mathord\cdot \rVert$ be a smooth norm on $\Rset ^d$. 
For any $\kappa > 0$, 
there is $\varepsilon > 0$ such that, 
for any vectors $u$, $v$ with 
$\lVert u \rVert$, $\lVert v \rVert \geq 1$ and 
$\lVert u - v \rVert < \varepsilon$, 
we have $\dist (y, \cBall (u, \lVert u \rVert)) < \kappa \lVert y \rVert$ 
for any $y \in \cBall (v, \lVert v \rVert)$. 
\end{lemma}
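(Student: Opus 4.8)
The plan is to argue by contradiction using compactness, applied after \emph{two} successive rescalings. Suppose the statement fails for some $\kappa _0 > 0$: then for every $k$ there are vectors $u _k, v _k$ with $\lVert u _k \rVert, \lVert v _k \rVert \ge 1$ and $\lVert u _k - v _k \rVert < 1/k$, together with a nonzero point $y _k \in \cBall (v _k, \lVert v _k \rVert)$ such that $\dist (y _k, \cBall (u _k, \lVert u _k \rVert)) \ge \kappa _0 \lVert y _k \rVert$. The three quantities $\lVert u _k - v _k \rVert$, $\dist (y _k, \cBall (u _k, \lVert u _k \rVert))$ and $\lVert y _k \rVert$ all scale linearly, and the conditions $\lVert u _k \rVert, \lVert v _k \rVert \ge 1$ and $y _k \in \cBall (v _k, \lVert v _k \rVert)$ are unchanged if the triple $(u _k, v _k, y _k)$ is multiplied by $1 / \min (\lVert u _k \rVert, \lVert v _k \rVert) \le 1$; so I may assume $\min (\lVert u _k \rVert, \lVert v _k \rVert) = 1$, hence $\lVert u _k \rVert, \lVert v _k \rVert \in [1, 1 + 1/k]$. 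Passing to a subsequence, $u _k \to w$ and $v _k \to w$ with $\lVert w \rVert = 1$; since every $\cBall (v _k, \lVert v _k \rVert) \subseteq \cBall (0, 2 \lVert v _k \rVert)$ is uniformly bounded, also $y _k \to y _*$ along a further subsequence, with $y _* \in \cBall (w, \lVert w \rVert)$. As $\cBall (u _k, \lVert u _k \rVert) \to \cBall (w, \lVert w \rVert)$ in the Hausdorff metric, the inequality survives in the limit: $\dist (y _*, \cBall (w, \lVert w \rVert)) \ge \kappa _0 \lVert y _* \rVert$. But $y _* \in \cBall (w, \lVert w \rVert)$ forces the left side to be $0$, so $y _* = 0$.

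Thus the offending points escape to the origin: $r _k := \lVert y _k \rVert \to 0$. Now blow up by the factor $1/r _k$: set $\hat y _k := y _k / r _k$ (a unit vector) and $A _k := \cBall (u _k, \lVert u _k \rVert) / r _k = \cBall (u _k / r _k, \lVert u _k \rVert / r _k)$, so that $\dist (\hat y _k, A _k) \ge \kappa _0$, while $\hat y _k$ lies in $\cBall (v _k, \lVert v _k \rVert) / r _k$. Smoothness of the norm means $\cBall (v _k, \lVert v _k \rVert)$ has a unique supporting halfspace at $0$: applying the definition of $\support ^{>0}$ with $a = - v _k$, that halfspace is $\support ^{\le 0} _{- v _k}$, i.e.\ $\cBall (v _k, \lVert v _k \rVert) \subseteq \support ^{\le 0} _{- v _k}$. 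A halfspace through $0$ is invariant under positive scaling, so $\hat y _k \in \support ^{\le 0} _{- v _k}$. By smoothness again, $\support ^{\le 0} _{- v _k}$ depends continuously on $- v _k / \lVert v _k \rVert \to - w$; so, passing to a further subsequence with $\hat y _k \to \hat y _*$ (a unit vector), we obtain $\hat y _* \in \support ^{\le 0} _{- w}$.

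The heart of the matter is the claim: \emph{every $z$ in the open halfspace $\operatorname{int} \support ^{\le 0} _{- w}$ lies in $A _k$ for all large $k$.} Granting it, pick such a $z$ with $\lVert z - \hat y _* \rVert < \kappa _0 / 2$ (possible, since $\hat y _*$ lies in the closure $\support ^{\le 0} _{- w}$ of that open halfspace); then for large $k$, $\dist (\hat y _k, A _k) \le \lVert \hat y _k - z \rVert \to \lVert \hat y _* - z \rVert < \kappa _0$, contradicting $\dist (\hat y _k, A _k) \ge \kappa _0$. To prove the claim, note that by smoothness the supporting halfspace $\support ^{\le 0} _{- w}$ is the tangent cone of $\cBall (w, \lVert w \rVert)$ at $0$, so for $z \in \operatorname{int} \support ^{\le 0} _{- w}$ we have $t z \in \operatorname{int} \cBall (w, \lVert w \rVert)$ for all small $t > 0$; fix one such $t _0$. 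Since $\cBall (u _k / \lVert u _k \rVert, 1) = \cBall (w, \lVert w \rVert) + (u _k / \lVert u _k \rVert - w)$ and the translation tends to $0$, we get $t _0 z \in \cBall (u _k / \lVert u _k \rVert, 1)$ for large $k$, and hence $[0, t _0 z] \subseteq \cBall (u _k / \lVert u _k \rVert, 1)$ by convexity (the ball contains $0$). Finally $z \in A _k$ is equivalent to $(r _k / \lVert u _k \rVert) z \in \cBall (u _k / \lVert u _k \rVert, 1)$; the scalar $r _k / \lVert u _k \rVert$ tends to $0$ (as $r _k \to 0$ and $\lVert u _k \rVert \ge 1$), so for large $k$ it is at most $t _0$ and $(r _k / \lVert u _k \rVert) z \in [0, t _0 z] \subseteq \cBall (u _k / \lVert u _k \rVert, 1)$, giving $z \in A _k$.

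I expect the main obstacle to be conceptual rather than computational: a single pass of compactness does \emph{not} finish the proof, because it only drives the offending points $y _k$ to the origin; one must rescale a second time and observe that, at that scale, the two families $\cBall (u _k, \lVert u _k \rVert)$ and $\cBall (v _k, \lVert v _k \rVert)$ converge to the \emph{same} halfspace $\support ^{\le 0} _{- w}$. Smoothness is used precisely here: it guarantees that this common limit is a halfspace and that it varies continuously with the center direction, and also that a direction in the open halfspace points into the interior of the ball. (Alternatively one can give a direct quantitative proof, first reducing to $\lVert u \rVert = \lVert v \rVert$ by scaling and then writing $\bd \cBall (0, 1)$ near a boundary point as a $C ^1$ graph over its tangent hyperplane with a modulus of smoothness that is uniform thanks to uniform Fr\'echet differentiability; this route is more laborious.)
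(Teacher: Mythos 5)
Your proof is correct, and it takes a genuinely different route from the paper's. The paper argues directly: it first observes that when $\lVert y\rVert$ is bounded away from $0$ (say $\lVert y\rVert\geq\eta$) the conclusion follows trivially from the Hausdorff closeness of the two balls, and then, for $\lVert y\rVert<\eta$, it approximates $\cBall(u,\lVert u\rVert)$ near $0$ by the supporting halfspace $\support^{\geq 0}_u$, invoking two uniform estimates that come from smoothness plus compactness of the unit sphere: the ball's boundary deviates from its tangent hyperplane only to second order near the touching point, and the tangent hyperplane varies continuously with the center direction. Your argument instead runs by contradiction with compactness applied twice: the first normalization-and-limit pass shows the putative counterexample points $y_k$ must tend to $0$; the second blow-up by $1/\lVert y_k\rVert$ replaces both families of balls by the common tangent halfspace $\support^{\leq 0}_{-w}$, where the claimed inequality visibly fails. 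The two proofs use smoothness in the same two places (unique supporting halfspace; continuity of that halfspace in the center direction), and your dichotomy ``one compactness pass only forces $y_k\to 0$, a second rescaling is needed'' is exactly parallel to the paper's case split on $\lVert y\rVert\geq\eta$ versus $\lVert y\rVert<\eta$. What your version buys is conceptual cleanliness and modularity (the geometric content is isolated in the tangent-cone claim), at the cost of being nonconstructive; the paper's version is a bit more of a quantifier dance but yields explicit $\eta,\varepsilon$. One small remark that applies to both proofs: the stated strict inequality $\dist(y,\cBall(u,\lVert u\rVert))<\kappa\lVert y\rVert$ fails at $y=0$ (both sides vanish); you implicitly and reasonably take $y\neq 0$, and the paper does the same without saying so.
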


\begin{proof}
Since $\dist (y, \cBall (u, \lVert u \rVert)) \leq 2 \varepsilon$, 
it is clear that, for any constant $\eta > 0$, 
the claim holds if we consider only those $y$ with $\lVert y \rVert \geq \eta$. 
Therefore, it suffices to prove the existence of $\eta > 0$, 
depending on $\lVert \mathord\cdot \rVert$ and $\kappa$, 
such that the claim holds for any $y$ with $\lVert y \rVert < \eta$. 

We find the desired $\eta$ and $\varepsilon$ as follows
(Fig.~\ref{figure: balls shifted}). 
Since the norm is smooth, the surface of a ball 
looks like a hyperplane locally at each point. 
Thus, there exists $\eta > 0$ such that 
for any $u \in \Rset ^d$ with $\lVert u \rVert \geq 1$
and any $z \in \support ^{\geq 0} _u$ with $\lVert z \rVert < \eta (1 + \kappa / 2)$, 
we have $
\dist (z, \cBall (u, \lVert u \rVert)) \leq \kappa \lVert z \rVert / (2 + \kappa)
$.  
Also, since changing slightly  a vector~$u$ of length $1$ or greater 
moves $\support ^{\geq 0}$ only slightly, 
there is $\varepsilon > 0$ so small that 
for any vectors $u$, $v$ of length $1$ or greater with
$\lVert u - v \rVert < \varepsilon$, 
we have $\dist (y, \support ^{\geq 0} _u) \leq \kappa \lVert y \rVert / (2 \eta)$
for all $y \in \support ^{\geq 0} _v$. 

\begin{figure}
\begin{center}
\includegraphics[clip,scale=1.1]{./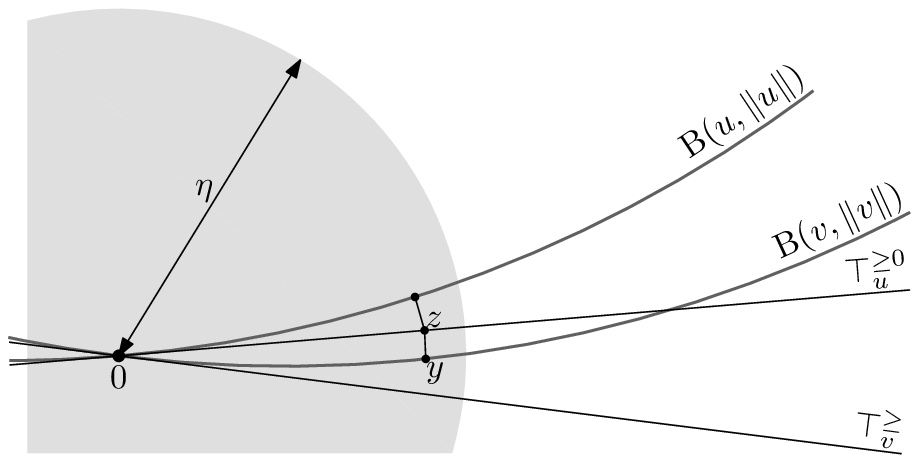}
\caption{When $u$ and $v$ are close, $y \in \cBall (v, \lVert v \rVert)$ 
         is not very far from $\cBall (u, \lVert u \rVert)$.}
\label{figure: balls shifted}
\end{center}
\end{figure}

Since $y \in \cBall (v, \lVert v \rVert) \subseteq \support ^{\geq 0} _v$, 
we have $\dist (y, \support ^{\geq 0} _u) \leq \kappa \lVert y \rVert / 2$ 
by our choice of $\varepsilon$. 
Let $z \in \support ^{\geq 0} _u$ be a point attaining this distance. 
Since $
 \lVert z \rVert 
\leq
 \lVert y \rVert + \lVert z - y \rVert 
\leq
 \lVert y \rVert + \kappa \lVert y \rVert / 2
=
 \lVert y \rVert (1 + \kappa / 2)
\leq 
 \eta (1 + \kappa / 2)
$, we have $
 \dist (z, \cBall (u, \lVert u \rVert)) 
\leq
 \kappa \lVert z \rVert / (2 + \kappa) 
\leq
 \kappa \lVert y \rVert / 2
$ by our choice of $\eta$. 
These imply $\dist (y, \cBall (u, \lVert u \rVert)) < \kappa \lVert y \rVert$ 
by the triangle inequality. 
\end{proof}

\begin{lemma}
\label{lemma: separate}
Let $\lVert \mathord\cdot \rVert$ be a smooth norm on $\Rset ^2$. 
For unit vectors $u$ and $v$ with $\lVert u - v \rVert < 2$, 
there is $\kappa > 0$ such that 
for all $
y \in \dom (v, u) \setminus \cBall (v, 1) 
$ sufficiently close to the origin (Fig.~\ref{figure: balls_tangent}), 
$\dist (y, \cBall (u, 1)) \geq \kappa \lVert y \rVert$. 
\begin{figure}
\begin{center}
\includegraphics[bb = 10 670 170 780,clip,scale=1.1]{./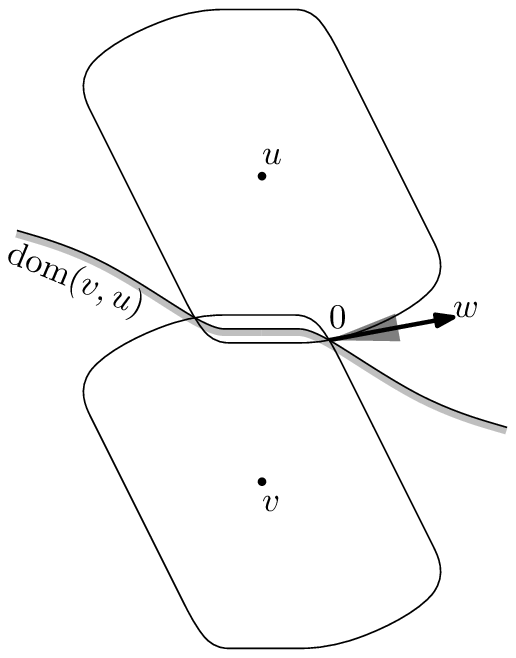}
\caption{The conclusion of Lemma~\ref{lemma: separate} states that 
         $\dom (v, u)$ and the boundary of $\cBall (u, 1)$
         ``make a positive angle'' at the origin.
         We prove this by showing that there is a cone (shaded) 
         whose axis is the tangent vector~$w$
         and which does not overlap $\dom (v, u)$.}
 \label{figure: balls_tangent}
\end{center}
\end{figure}
\end{lemma}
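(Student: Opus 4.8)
The plan is to reduce the statement to a first-order computation at the origin, using smoothness in essentially the same way as in the proof of Theorem~\ref{t:norms}. Write $\phi(x) := \lVert x \rVert$; smoothness makes $\phi$ a differentiable convex function on $\Rset ^2 \setminus \{0\}$, hence $C^1$ there, and for a unit vector $a$ write $f _a := D\phi(a)$ for its gradient at $a$. Standard convexity facts then give $f _a(a) = 1$, $\lVert f _a \rVert _* = 1$ in the dual norm, $f _{-a} = -f _a$, and $\support ^{\geq 0} _a = \{\, x : f _a(x) \geq 0 \,\}$, with $\cBall(a, 1) \subseteq \support ^{\geq 0} _a$ and $\bd \cBall(a, 1)$ tangent at the origin to the line $L _a := \ker f _a$ (for $a = u$ this is the line spanned by the tangent vector $w$ of Fig.~\ref{figure: balls_tangent}). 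By \eqref{equation: equivalence}, $\lVert u - v \rVert < 2$ means precisely $u \not\sim -v$; I will also use $u \not\sim v$, equivalently $\lVert u + v \rVert < 2$, which holds in the configuration where the lemma is applied and is in fact needed (for $u \sim v$ with a non-rotund norm, $\dom(v, u) \setminus \cBall(v, 1)$ can cling to $L _u$ and the conclusion fails). Together these say $f _u$ and $f _v$ are linearly independent, i.e.\ $L _u \neq L _v$.

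The one analytic ingredient is a \emph{uniform} first-order expansion: for a unit vector $a$,
\[
 \lVert y - a \rVert \;=\; \phi(-a + y) \;=\; 1 - f _a(y) + o(\lVert y \rVert)
 \qquad (y \to 0),
\]
with the error $o(\lVert y \rVert)$ uniform over the direction of $y$. This is the same kind of estimate on which the proof of Theorem~\ref{t:norms} turns, and it follows the same way: $C^1$-ness makes $D\phi$ uniformly continuous on a compact neighbourhood of $-a$, and integrating $D\phi$ along the segment from $-a$ to $-a + y$ bounds the remainder.

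Now I locate the relevant points. If $y \in \dom(v, u) \setminus \cBall(v, 1)$, then $\lVert y - v \rVert \leq \lVert y - u \rVert$ gives $f _u(y) \leq f _v(y) + o(\lVert y \rVert)$, while $\lVert y - v \rVert > 1$ gives $f _v(y) < o(\lVert y \rVert)$; dividing by $\lVert y \rVert$ and letting $y \to 0$, the directions $y / \lVert y \rVert$ accumulate only on the closed convex cone $W := \{\, d : f _u(d) \leq f _v(d) \leq 0 \,\}$. It is the \emph{conjunction} of the two conditions defining $\dom(v, u) \setminus \cBall(v, 1)$ that pins the direction into $W$ --- neither alone would do --- and this is the ``positive angle'' phenomenon of Fig.~\ref{figure: balls_tangent}. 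Moreover $W$ meets $L _u$ only at the origin: if $d \in W$ and $f _u(d) = 0$ then $0 \leq f _v(d) \leq 0$, so $d \in L _u \cap L _v = \{0\}$. Hence $\eta := \min\{\, \lvert f _u(d) \rvert : d \in W,\ \lVert d \rVert = 1 \,\} > 0$ by compactness, and by continuity of $f _u$ together with the accumulation statement there is $r _0 > 0$ such that every $y \in \dom(v, u) \setminus \cBall(v, 1)$ with $0 < \lVert y \rVert < r _0$ satisfies $\lvert f _u(y) \rvert \geq \tfrac\eta2 \lVert y \rVert$; since also $f _u(y) \leq o(\lVert y \rVert)$, shrinking $r _0$ forces $f _u(y) \leq -\tfrac\eta2 \lVert y \rVert$.

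Finally, since $\cBall(u, 1) \subseteq \support ^{\geq 0} _u = \{\, f _u \geq 0 \,\}$ and $\lVert f _u \rVert _* = 1$, for such $y$ we get $\dist(y, \cBall(u, 1)) \geq \dist(y, \support ^{\geq 0} _u) = -f _u(y) \geq \tfrac\eta2 \lVert y \rVert$, so $\kappa := \eta / 2$ works. The main obstacle is conceptual rather than computational: seeing that the two conditions cutting out $\dom(v, u) \setminus \cBall(v, 1)$ jointly trap the nearby directions in the wedge $W$, which --- thanks to $L _u \neq L _v$ --- stands off the tangent line $L _u$, i.e.\ lies outside the shaded cone of Fig.~\ref{figure: balls_tangent}. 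On the technical side the only input not available for arbitrary norms is the uniform expansion above, where smoothness enters exactly as in Theorem~\ref{t:norms}; and one must be careful that the degenerate configuration $L _u = L _v$ (that is, $u \sim \pm v$) is genuinely excluded by the hypotheses.
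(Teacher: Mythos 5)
Your proof is correct, and it takes essentially the same route as the paper's own proof: a first-order linearization of the norm at the origin via the supporting functionals $f_u$, $f_v$, combined with the fact that the tangent lines $L_u=\ker f_u$ and $L_v=\ker f_v$ are distinct. The paper fixes the unit tangent $w\in L_u$ pointing out of $\cBall(v,1)$, computes the one-sided derivatives of $\delta\mapsto\lVert u-\delta w\rVert$ and $\delta\mapsto\lVert v-\delta w\rVert$ at $\delta=0$ to carve a cone around $w$ inside the interior of $\dom(u,v)$, and then says ``this implies what is stated.'' You instead show directly that the two constraints defining $\dom(v,u)\setminus\cBall(v,1)$ push the direction $y/\lVert y\rVert$ into the wedge $W=\{\,d:f_u(d)\le f_v(d)\le 0\,\}$, which meets $L_u$ only at $0$, and then close the argument with $\dist(y,\cBall(u,1))\ge -f_u(y)$ since $\cBall(u,1)\subseteq\{f_u\ge 0\}$ and $\lVert f_u\rVert_*=1$. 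Your version is the more complete of the two: it makes explicit how the ``$\setminus\cBall(v,1)$'' condition rules out the $-w$ side, and it actually derives the stated inequality from the cone exclusion, both of which the paper leaves to the reader.

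More significantly, your remark about the hypothesis identifies a genuine error in the paper. As written, the lemma is false: by \eqref{equation: equivalence}, $\lVert u-v\rVert<2$ is equivalent to $u\not\sim-v$, i.e.\ $f_u\neq-f_v$, but this does not preclude $u\sim v$, which happens whenever $u\neq v$ lie on a common flat face of the unit sphere of a smooth non-rotund norm. In that case $L_u=L_v$, the wedge $W$ degenerates to a halfplane containing $L_u$, and the conclusion fails. For instance, take the unit ball to be a smoothed vertical ``stadium'' whose boundary contains the flat segments $\{(x,\pm1):|x|\le\tfrac12\}$, and set $u=(-\tfrac14,1)$, $v=(\tfrac14,1)$ (so $\lVert u-v\rVert<2$ but $u\sim v$); then for small $t>0$ the point $y=(t,-t^2)$ lies in $\dom(v,u)\setminus\cBall(v,1)$ yet $\dist(y,\cBall(u,1))=t^2=o(\lVert y\rVert)$. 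The paper's proof slips exactly where you point: the sentence ``Therefore, there is a (unique) unit vector $w\in\support^{\geq 0}_u\cap\support^{\leq 0}_u$ that heads out of $\cBall(v,1)$'' requires $f_v$ to be nonzero on $L_u$, i.e.\ $f_u\neq\pm f_v$, and $\lVert u-v\rVert<2$ delivers only one sign. Your added hypothesis $\lVert u+v\rVert<2$ (i.e.\ $u\not\sim v$) is exactly what repairs both the statement and the proof. One point you should not leave implicit: you assert that $u\not\sim v$ holds at the one place the lemma is invoked (in the proof of Lemma~\ref{lemma: 2 l}, with $u=(c-c')/h$ and $v=(p_1-c')/h$). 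That claim is plausible but not obvious from what is available at that stage of the argument, and it needs a verification before Theorem~\ref{theorem: smooth two singleton plane} can be considered fully repaired.
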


\begin{proof}
Because $\lVert u - v \rVert < 2$, 
the vectors $u$ and $-v$ do not share the tangent. 
Therefore, there is a (unique) unit vector $
w \in \support ^{\geq 0} _u \cap \support ^{\leq 0} _u
$ that heads out of $\cBall (v, 1)$. 
Since 
\begin{align*}
 \lim _{\delta \to 0} \frac{\lVert u - \delta w \rVert - 1}{\delta} & = 0, 
&
 \beta := \lim _{\delta \to 0} \frac{\lVert v - \delta w \rVert - 1}{\delta} & > 0, 
\end{align*}
there exists $\delta _0 > 0$ so small that 
for all positive $\delta < \delta _0$, 
we have 
\begin{align*}
 \frac{\lVert u - \delta w \rVert - 1}{\delta} & < \frac 1 3 \beta, 
&
 \frac{\lVert v - \delta w \rVert - 1}{\delta} & > \frac 2 3 \beta, 
\end{align*}
and hence $
 \lVert u - \delta w \rVert < \lVert v - \delta w \rVert - \beta \delta / 3
$.  This implies that 
$\lVert u - x \rVert < \lVert v - x \rVert$ 
for all $x \in \cBall (\delta w, \beta \delta / 6)$. 
Thus, $\dom (v, u)$ is disjoint from 
a cone (except at the origin)
whose vertex is at the origin and axis is the vector~$w$ 
(see Fig.~\ref{figure: balls_tangent}). 
This implies what is stated. 
\end{proof}

Now we look at the situation of Theorem~\ref{theorem: smooth two singleton plane}. 
Let $\RR = (R _0, R _1)$ and $\SS = (S _0, S _1)$ be pairs 
satisfying $\RR \preceq \SS$ and $\RR = \Dom \SS$, $\SS = \Dom \RR$
(which exist by Theorem~\ref{t:doublezone}). 
As before, it suffices to show that $\RR = \SS$. 
Suppose otherwise. 
Then $
 h
= 
 \min \{\dist (p _0, S _0 \setminus R _0), \dist (p _1, S _1 \setminus R _1)\}
$ exists. 

\begin{lemma}
 \label{lemma: 2 l}
In the above setting, 
if a point $c \in \closure{S _0 \setminus R _0}$
satisfies $\lVert c - p _0 \rVert = h$, 
then 
\begin{enumerate}
\renewcommand{\theenumi}{\textup{(\alph{enumi})}}
\renewcommand{\labelenumi}{\theenumi}
\item \label{enumi: 2 l}
$\lVert c - p _1 \rVert = 2 h$; 
\item \label{enumi: conjugate}
there is a point $c' \in \closure{S _1 \setminus R _1}$
satisfying $\lVert c' - c \rVert = \lVert c' - p _1 \rVert = h$. 
\end{enumerate}
\end{lemma}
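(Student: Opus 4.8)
The plan is to: (i) pin down where $c$ lies relative to the four regions $R_0,R_1,S_0,S_1$; (ii) produce the conjugate point $c'$ of part~(b) as a limit of nearest-point projections; (iii) upgrade a chain of triangle-inequality estimates to equalities, using the minimality of $h$; and (iv) deduce (a) from (b), which is where smoothness of the norm is invoked. Throughout recall that $R_0=\dom(p_0,S_1)$, $R_1=\dom(p_1,S_0)$, $S_0=\dom(p_0,R_1)$, $S_1=\dom(p_1,R_0)$.

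First, since $S_0$ is closed, $c\in\closure{S_0\setminus R_0}\subseteq S_0=\dom(p_0,R_1)$, so $\dist(c,R_1)\geq\lVert c-p_0\rVert=h$. Choosing $c_n\to c$ with $c_n\in S_0\setminus R_0$ and passing to the limit in $\dist(c_n,S_1)<\lVert c_n-p_0\rVert$ (valid because $c_n\notin R_0=\dom(p_0,S_1)$) gives $\dist(c,S_1)\leq h$. Let $q_n$ be a point of $S_1$ nearest to $c_n$; since $\lVert c_n-q_n\rVert=\dist(c_n,S_1)<\lVert c_n-p_0\rVert\leq\dist(c_n,R_1)$, the point $q_n$ cannot lie in $R_1$, so $q_n\in S_1\setminus R_1$. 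The $q_n$ are bounded, so along a subsequence $q_n\to c'\in\closure{S_1\setminus R_1}$ with $\lVert c-c'\rVert=\dist(c,S_1)\leq h$; and by the definition of $h$ as a minimum, $\lVert c'-p_1\rVert\geq\dist(p_1,S_1\setminus R_1)\geq h$.

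The heart of the proof is to show $c\in R_0$, equivalently $\dist(c,S_1)=h$, equivalently $\lVert c-c'\rVert=h$. I would argue by contradiction: if $c\notin R_0$ then $c\in S_0\setminus R_0$, and since $x\mapsto\lVert x-p_0\rVert$ attains its minimum over $S_0\setminus R_0$ (which is relatively open in $S_0$) at $c$, the point $c$ must lie on $\bd S_0\subseteq\{x:\lVert x-p_0\rVert=\dist(x,R_1)\}$; hence $\dist(c,R_1)=h$ while $\dist(c,S_1)<h$. Projecting $c$ onto $S_1$, the result onto $S_0$, and so on, one obtains an alternating sequence of points of $\closure{S_0\setminus R_0}$ and $\closure{S_1\setminus R_1}$, all at distance $\geq h$ from both sites, whose consecutive distances are non-increasing and stay strictly below $h$; chasing these inequalities against the minimality of $h$ and against $\dist(c',R_0)\geq\lVert c'-p_1\rVert\geq h$ is meant to yield a contradiction. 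I expect this ``bouncing'' step to be the main obstacle: it will likely require a quantitative contraction estimate (so that the sequence converges and the limit is then seen to be impossible), in the spirit of the cone lemmas of Sections~\ref{s:euclcase}--\ref{s:nrmproof}. Once $c\in R_0$ is established, $\dist(c,S_1)=\lVert c-p_0\rVert=h=\lVert c-c'\rVert$; moreover $\operatorname{int}\cBall(c,h)$ then misses $S_1$, so $c'$ lies on $\bd S_1=\bd\dom(p_1,R_0)\subseteq\{x:\lVert x-p_1\rVert=\dist(x,R_0)\}$, whence $\lVert c'-p_1\rVert=\dist(c',R_0)\leq\lVert c'-c\rVert=h$; combined with the reverse inequality this gives $\lVert c'-p_1\rVert=h=\dist(c',R_0)$, with $c$ a nearest point of $R_0$ to $c'$. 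Taking this $c'$ proves (b).

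For (a), the bound $\lVert c-p_1\rVert\leq\lVert c-c'\rVert+\lVert c'-p_1\rVert=2h$ is immediate from (b). For the reverse inequality, suppose $\lVert c-p_1\rVert<2h$ and let $z$ be the point of the segment $cp_1$ with $\lVert c-z\rVert=h$ (such a $z$ exists, since $\lVert c-p_1\rVert\geq\dist(c,S_1)=h$), so that $\lVert z-p_1\rVert=\lVert c-p_1\rVert-h<h$. The plan is to show $z\in S_1\setminus R_1$, which is absurd because every point of $S_1\setminus R_1$ is at distance $\geq\dist(p_1,S_1\setminus R_1)\geq h$ from $p_1$. That $z\notin R_1$ follows from $\dist(c,R_1)\geq h$, since $z\in\cBall(c,h)$ (with a small separate argument in the degenerate case $\dist(c,R_1)=h$). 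That $z\in S_1$ is the second point where smoothness is essential: as $\lVert z-c\rVert=h=\dist(c,S_1)$ and, by smoothness of the norm, the sphere $\bd\cBall(c,h)$ is tangent to $S_1$ at the nearest point $c'$, the local ``flat-hyperplane'' behaviour of a smooth ball near a boundary point — the same mechanism exploited in Lemmas~\ref{lemma: balls} and~\ref{lemma: separate} — forces $z$ into $S_1=\dom(p_1,R_0)$. This produces the contradiction and completes the proof of (a), hence of the lemma.
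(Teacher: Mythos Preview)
Your proposal has two gaps, one minor and one substantial.

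\textbf{The step $c\in R_0$.} You turn this into an open problem (the ``bouncing'' argument you admit is incomplete), but it is a one-liner, and in fact your own setup contains it. If $c\notin R_0$, then $c\in S_0\setminus R_0$. Since $S_0=\dom(p_0,R_1)$ is star-shaped about $p_0$ (an easy triangle-inequality check) and $R_0$ is closed, every point of the segment $cp_0$ sufficiently close to $c$ lies in $S_0\setminus R_0$ and has distance to $p_0$ strictly less than $h$---contradicting minimality of $h$. This is exactly the paper's argument; no contraction estimate is needed. Your conclusion ``hence $c\in\bd S_0$'' is a wrong turn: the relative-openness of $S_0\setminus R_0$ together with the star shape of $S_0$ already yields the contradiction.

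\textbf{Part (a).} Here your route diverges from the paper's and does not close. You try to place a single point $z$ on the segment $cp_1$ into $S_1\setminus R_1$. Showing $z\in S_1=\dom(p_1,R_0)$ requires a \emph{lower} bound on $\dist(z,R_0)$, but all you know is $c\in R_0$, which gives the \emph{upper} bound $\dist(z,R_0)\le h$; smoothness of the ball at $c'$ says nothing about the global shape of $R_0$ near $z$. The appeal to Lemmas~\ref{lemma: balls} and~\ref{lemma: separate} is not an argument here---those lemmas compare two balls through a common point, not a point against an arbitrary dominance region.

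The paper instead keeps the approximating sequence $y_{j_i}\in S_1\setminus R_1$ (your $q_n$) and derives two incompatible estimates on $\dist(y_{j_i},\cBall(c,h))$. On one hand, $y_{j_i}\in S_1\subseteq\dom(p_1,c)$ and (by the segment trick again) $y_{j_i}\notin\cBall(p_1,h)$; so if $\lVert c-p_1\rVert<2h$, Lemma~\ref{lemma: separate} applied at $c'$ gives $\dist(y_{j_i},\cBall(c,h))\ge\kappa\lVert y_{j_i}-c'\rVert$ for large $i$. On the other hand, $y_{j_i}\in\cBall(x_{j_i},\lVert x_{j_i}-c'\rVert)$ (since $c'\in S_1$ and $y_{j_i}$ is nearest) and $x_{j_i}\to c$, so Lemma~\ref{lemma: balls} gives $\dist(y_{j_i},\cBall(c,h))<\kappa\lVert y_{j_i}-c'\rVert$ for large $i$. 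The sequence is essential: it is what lets you invoke the two ball lemmas, which is where smoothness actually enters.
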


\begin{proof}
Note that $c \in R _0$, since 
otherwise $S _0 \setminus R _0$ intersects 
a part of the segment $c p _0$ of positive length, 
contradicting the minimality of $h$. 

There is a sequence $(x _i) _{i \in \Nset}$ of points in $S _0 \setminus R _0$ 
that converges to $c$. 
For each $i \in \Nset$, 
let $y _i \in S _1$ be a closest point to $x _i$. 
Since $x _i \in S _0 \setminus R _0$, 
we have $\lVert y _i - x _i \rVert = \dist (x _i, S _1) < \lVert p _0 - x _i \rVert$ 
and $y _i \in S _1 \setminus R _1$. 
The sequence $(y _i) _{i \in \Nset}$ has a subsequence $(y _{j _i}) _{i \in \Nset}$ 
that converges to a point $c' \in \closure{S _1 \setminus R _1}$
(Fig.~\ref{figure: four balls}).
\begin{figure}
\begin{center}
\includegraphics[clip,scale=1.1]{./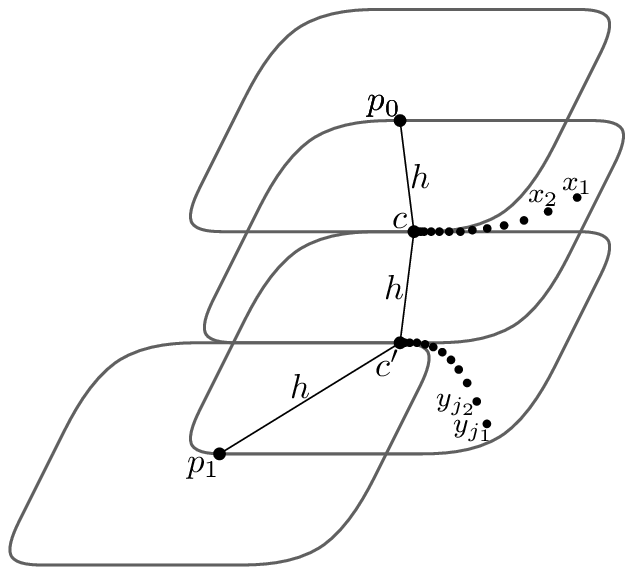}
\caption{Lemma~\ref{lemma: 2 l}.}
 \label{figure: four balls}
\end{center}
\end{figure}
Note that 
\begin{equation*}
\label{equation: c _0 and c _1}
 \lVert c' - p _1 \rVert
\leq 
 \lVert c - c' \rVert 
=
 \lim _{i \to \infty} \lVert x _{j _i} - y _{j _i} \rVert 
\leq 
 \lim _{i \to \infty} \lVert p _0 - x _{j _i} \rVert 
=
 \lVert p _0 - c \rVert
=
 h, 
\end{equation*}
where the first inequality is by $c' \in S _1$ and $c \in R _0$. 
In fact, this
holds in equality by the minimality of $h$. 
We have proved \ref{enumi: conjugate}. 

For each $i$, 
since $S _1 \setminus R _1$ intersects a part of the segment $y _{j _i} c'$
of positive length, 
$y _{j _i} \notin \cBall (p _1, h)$ by the minimality of $h$. 
Also, $
 y _{j _i} 
\in 
 S _1 
\subseteq
 \dom (p _1, c) 
$. 
As $i$ increases, $y _{j _i}$ comes arbitrarily close to $c'$. 
Hence, 
if \ref{enumi: 2 l} is not true, 
Lemma~\ref{lemma: separate} gives a constant $\kappa > 0$ such that 
$\dist (y _{j _i}, \cBall (c, h)) \geq \kappa \lVert y _{j _i} - c' \rVert$
for all but finitely many $i$. 
On the other hand, since 
$y _{j _i}$ is in $\cBall (x _{j _i}, \lVert x _{j _i} - c' \rVert)$ and 
$(x _{j _i}) _{i \in \Nset}$ converges to $c$, 
Lemma~\ref{lemma: balls} shows that $
\dist (y _{j _i}, \cBall (c, h)) < \kappa \lVert y _{j _i} - c' \rVert
$ for all but finitely many $i$. 
This is a contradiction.  We have proved \ref{enumi: 2 l}. 
\end{proof}

\begin{lemma}
 \label{lemma: 3 l}
In the above setting, 
$\lVert p _0 - p _1 \rVert = 3 h$. 
\end{lemma}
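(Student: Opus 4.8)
The plan is to build a chain of three steps of length $h$ from $p_0$ to $p_1$ and to show that consecutive steps are aligned in the sense of the relation~$\sim$, after which \eqref{equation: equivalence} with $m=3$ gives $\lVert p_0-p_1\rVert=3h$ at once.

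First I would fix a point realizing $h$. After possibly interchanging the indices $0$ and $1$ (the whole ``above setting'' is symmetric in $p_0\leftrightarrow p_1$), we may assume the minimum defining $h$ is attained on the $P_0$ side, so that there is a point $c\in\closure{S_0\setminus R_0}$ with $\lVert c-p_0\rVert=h$; such a point exists by the definition of $h$, and $h>0$ by Observation~\ref{o:eps4ball}, so that $\sim$ is defined for the vectors introduced below. Applying Lemma~\ref{lemma: 2 l} to $c$ gives both $\lVert c-p_1\rVert=2h$ and a point $c'\in\closure{S_1\setminus R_1}$ with $\lVert c'-c\rVert=\lVert c'-p_1\rVert=h$. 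Since $c'\in\closure{S_1\setminus R_1}$ and $\lVert c'-p_1\rVert=h$, the point $c'$ meets the hypothesis of Lemma~\ref{lemma: 2 l} with the two sites exchanged (this also forces $\dist(p_1,S_1\setminus R_1)=h$, consistent with the definition of $h$); so part~\ref{enumi: 2 l} of that lemma, applied with $p_0$ and $p_1$ exchanged, yields $\lVert c'-p_0\rVert=2h$.

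Now set $a_1:=c-p_0$, $a_2:=c'-c$, and $a_3:=p_1-c'$, so that $a_1+a_2+a_3=p_1-p_0$ and $\lVert a_1\rVert=\lVert a_2\rVert=\lVert a_3\rVert=h$. From $\lVert a_1+a_2\rVert=\lVert c'-p_0\rVert=2h=\lVert a_1\rVert+\lVert a_2\rVert$ and \eqref{equation: equivalence} with $m=2$ we get $a_1\sim a_2$, and likewise from $\lVert a_2+a_3\rVert=\lVert c-p_1\rVert=2h$ we get $a_2\sim a_3$. Since $\sim$ is an equivalence relation, $a_1\sim a_2\sim a_3$, and \eqref{equation: equivalence} with $m=3$ now gives $\lVert p_0-p_1\rVert=\lVert a_1+a_2+a_3\rVert=\lVert a_1\rVert+\lVert a_2\rVert+\lVert a_3\rVert=3h$, as desired.

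There is essentially no obstacle here beyond the bookkeeping: the single point to check carefully is that Lemma~\ref{lemma: 2 l} may legitimately be reapplied to $c'$ with the roles of $p_0$ and $p_1$ exchanged, which holds because the double-zone-diagram relations $\RR=\Dom\SS$, $\SS=\Dom\RR$ and the definition of $h$ are all symmetric under this exchange.
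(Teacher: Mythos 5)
Your proof is correct and follows the same route as the paper's: fix $c$ attaining $h$, use Lemma~\ref{lemma: 2 l} to produce $c'$ and the two distances $2h$, then observe that the three unit-length steps are $\sim$-equivalent and apply \eqref{equation: equivalence} with $m=3$. The extra bookkeeping you supply (the WLOG on the index and the verification that $c'$ meets the hypothesis of Lemma~\ref{lemma: 2 l} with the sites swapped) is correct and merely makes explicit what the paper leaves tacit.
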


\begin{proof} 
By the definition of $h$, 
there is a point $c \in \closure{S _0 \setminus R _0}$ 
satisfying $\lVert c - p _0 \rVert = h$. 
By Lemma~\ref{lemma: 2 l}\ref{enumi: conjugate}, 
there is a point $c' \in \closure{S _1 \setminus R _1}$ 
satisfying $\lVert c' - c \rVert = \lVert c' - p _1 \rVert = h$. 
By Lemma~\ref{lemma: 2 l}\ref{enumi: 2 l} 
(and the same lemma with the sites swapped), 
$\lVert c - p _1 \rVert = \lVert c' - p _0 \rVert = 2 h$. 
This implies $(c - p _0) \sim (c' - c) \sim (p _1 - c')$
and thus $\lVert p _0 - p _1 \rVert = 3 h$ 
by \eqref{equation: equivalence} at the beginning of this section. 
\end{proof}

To prove Theorem~\ref{theorem: smooth two singleton plane}, 
we will construct a sequence $(b _t) _{t \in \Nset}$ of points 
in $R \setminus S$, as we did in Section~\ref{s:euclcase}. 
Recall that for each $i \in \{0, 1\}$ and $b \in S _i$, 
we define $a (b)$ to be the 
closest point to $b$ 
that is in the intersection of $R _i$ with the segment $b p _i$
(note that since we do not have the cone lemma this time, 
the intersection of $b p _i$ and $\bd R _i$ is not always unique). 
As before, let $s (b) = \lVert b - p _i \rVert$ 
and $\delta (b) = \lVert b - a (b) \rVert$. 

The proof goes as follows. 
This time, we begin with a point~$b _0 \in S _0 \setminus R _0$ that is 
within distance $h + \varepsilon$ from the nearest site, 
for some small $\varepsilon > 0$
(such $b _0$ exists by the definition of $h$), 
and take $b _1$, $b _2$, \ldots 
as we did in Section~\ref{s:euclcase} using Lemma~\ref{l:step}: 
For each $b _t \in S _i \setminus R _i$, 
we let $b _{t + 1} \in S _{1 - i} \setminus R _{1 - i}$ be a point 
that is at the same distance from $a (b _t)$ as $p _i$ is. 
Then each $b _t$ will be also 
within distance $h + \varepsilon$ from the nearest site~$p _i$. 
Because we have proved that the sites are $3 h$ apart, 
and the path $p _i$-$a (b _t)$-$b _{t + 1}$-$p _{1 - i}$ consists of 
three segments shorter than $h + \varepsilon$, 
the path must be ``almost straight''. 
This implies that we will always have 
the case~\ref{enumi: gap increases} in Section~\ref{s:euclcase}
(Fig.~\ref{f:cutball} left): 

\begin{lemma}
 \label{lemma: gap increases (smooth)}
In the above setting, 
the following holds for some $\varepsilon > 0$:
For each $i \in \{0, 1\}$ and $
b \in S _i \setminus R _i 
$ satisfying $s := s (b) < h + \varepsilon$, there is $
b' \in S _{1 - i} \setminus R _{1 - i} 
$ such that $\delta := \delta (b)$, $s' := s (b')$, $\delta' := \delta (b')$ satisfy
\ref{enumi: gap increases} of Section~\ref{s:euclcase} 
(i.e., $\delta' \geq \delta$ and $s' \leq s - \delta$). 
\end{lemma}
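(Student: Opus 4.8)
The plan is to show that for $b\in S_i\setminus R_i$ with $s=s(b)$ close to $h$, the natural candidate $b'\in S_{1-i}\setminus R_{1-i}$ produced exactly as in the proof of Lemma~\ref{l:step} automatically lands in the ``good'' case~\ref{enumi: gap increases}, essentially because the near-equality $s\approx h$ together with $\lVert p_0-p_1\rVert=3h$ (Lemma~\ref{lemma: 3 l}) forces the broken path $p_i$--$a(b)$--$b'$--$p_{1-i}$ to be nearly a straight segment of total length $3h$, which geometrically is incompatible with $b\notin\cBall(b',r)$. First I would set up the same objects as before: let $a:=a(b)$, $p:=p_i$, $r:=\dist(a,p)=s-\delta$; since $a\in\bd R_i$ and $\RR=\Dom\SS$, pick $j=1-i$ and $b'\in S_j$ with $\dist(a,b')=r$. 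As in Lemma~\ref{l:step}, inequality $\delta'\ge s-\dist(b,b')$ holds (using $\dist(a',b)\ge s$ from $\SS=\Dom\RR$), and $\delta'>0$ follows from rotundity being \emph{replaced} here by the smoothness-based equality condition \eqref{equation: equivalence}: if $\dist(b,b')=s$ then the three-term triangle inequality along $b$--$a$--$b'$ and then $a$--$p$ would be tight, forcing $b'=p$, impossible. Also $s'\le\dist(b',R_i)\le\dist(b',a)=r=s-\delta$, so the first half of~\ref{enumi: gap increases} is free, exactly as before.

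So the whole content is to rule out the case $b\notin\cBall(b',r)$, i.e.\ $\dist(b,b')>r$; in the complementary case $b\in\cBall(b',r)$ one gets $\dist(b,b')\le r$ and hence $\delta'\ge s-r=\delta$ from \eqref{e:delta'}, which is~\ref{enumi: gap increases}. Here is where $h$ and the $3h$-separation enter. Suppose $b\notin\cBall(b',r)$. Consider the point $a'=a(b')$ (closest point of $R_{1-i}\cap b'p_{1-i}$ to $b'$) and write $r'=\dist(a',p_{1-i})$. Chasing the same distance estimates as in the derivation of Lemma~\ref{lemma: 2 l}, one finds that $s$ near $h$ forces $\dist(b,b')$, $\dist(a,b)=\delta$, etc.\ all to be controlled, and in particular $\dist(a,p)+\dist(a,b')+\dots$ must be within $O(\varepsilon)$ of $3h=\lVert p_0-p_1\rVert$. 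By \eqref{equation: equivalence} (or rather its quantitative companion, Lemma~\ref{lemma: almost straight}), a broken geodesic of three short segments whose total length nearly achieves the minimum $\lVert p_0-p_1\rVert$ must be ``almost straight'': each consecutive pair of direction vectors $u,v$ satisfies $\lVert u+v\rVert>2-\beta$, hence $\lVert u-\alpha v\rVert\le1$. Applied to the turn at $b'$ (between the directions $b'\to b$ and $b'\to p_{1-i}$, say, or between $a\to b'$ and $b'\to p_{1-i}$), almost-straightness of the path at $b'$ means the ball $\cBall(b',r)$ reaches back far enough along the incoming direction to contain $b$ — contradicting $b\notin\cBall(b',r)$. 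Concretely I would choose $\varepsilon$ small enough (depending only on the norm, via the $\alpha,\beta$ of Lemma~\ref{lemma: almost straight} and the fixed quantity $h$) that the near-straightness forces $\dist(b,b')\le r$ outright.

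The main obstacle, and the step I would spend the most care on, is making the phrase ``the path $p_i$--$a(b_t)$--$b'$--$p_{1-i}$ is almost straight'' quantitatively precise and then extracting from it the containment $b\in\cBall(b',r)$. Two subtleties: (i) the segments are $p_i$ to $a$, $a$ to $b'$, and $b'$ to $p_{1-i}$, and one must check their total length is within $O(\varepsilon)$ of $3h$ — this uses $s(b)<h+\varepsilon$, the bound $s'\le s-\delta\le h+\varepsilon$, the $2h$-relations from Lemma~\ref{lemma: 2 l}, and the triangle inequality, and there is some bookkeeping to ensure no hidden loss; (ii) Lemma~\ref{lemma: almost straight} gives $\lVert u-\alpha v\rVert\le1$ for the \emph{forward} direction, and one has to orient the vectors correctly so that ``the $r$-ball around $b'$ reaches back to $b$'' is literally the statement $\lVert(b-b')/r\rVert\le$ something $\le1$; this amounts to noting $b-b'$ points nearly opposite to $b'-p_{1-i}$ while $a-b'$ points nearly along $b'-p_{1-i}$, so $b-b'\approx -(\delta/r)(b'-a)$ rescaled, and for small turning angle this vector has norm $\le r$. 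Everything else (the existence of $b'$, the lower bound on $\delta'$, the free inequality $s'\le s-\delta$) is a verbatim transcription of Lemma~\ref{l:step} with ``rotund'' swapped for the smoothness facts \eqref{equation: equivalence} and Lemma~\ref{lemma: almost straight}, so once the almost-straight argument is nailed down the proof is complete, and with it — running the same descent on $s(b_t)$ as in Section~\ref{s:euclcase}, now with only case~\ref{enumi: gap increases} ever occurring, which still drives $s(b_t)\to0$ against the lower bound $\tfrac\eps4$ — Theorem~\ref{theorem: smooth two singleton plane} follows.
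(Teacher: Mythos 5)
Your high-level strategy matches the paper's: use the $3h$-separation (Lemma~\ref{lemma: 3 l}) together with the near-degeneracy of the path $p_i\to a\to b'\to p_{1-i}$ and the almost-straight lemma (Lemma~\ref{lemma: almost straight}) to conclude $\|b-b'\|\le r$, which via \eqref{e:delta'} gives $\delta'\ge\delta$, while $s'\le s-\delta$ is free. But the step you yourself flag as ``the main obstacle'' is exactly the content of the lemma, and the sketch you offer for it does not go through. The paper applies Lemma~\ref{lemma: almost straight} at the vertex $a$, with $u=(b'-a)/r$ and $v=(a-p_i)/r$: these really are unit vectors because $\|b'-a\|=\|a-p_i\|=r$ by construction, so the hypothesis $\|u+v\|=\|b'-p_i\|/r>2-\beta$ drops out of Lemma~\ref{lemma: 3 l} and the triangle inequality in one short chain. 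The conclusion $\|(b'-a)-\alpha(a-p_i)\|\le r$, after shrinking $\alpha$ to $\delta/r$ (legal by convexity of the ball, and $\delta/r\le\varepsilon/h\le\alpha$ since $\delta<\varepsilon$ and $r\ge h$), gives $\|b'-b\|\le r$ \emph{directly}, because $b-a=(\delta/r)(a-p_i)$ exactly by collinearity --- no case distinction, no contradiction argument.

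Your proposed application at the vertex $b'$ (directions $b'\to b$ or $a\to b'$ against $b'\to p_{1-i}$) is the wrong move: the two segments meeting at $b'$ have lengths $r$ and $s'$, which are unequal, so for unit vectors $\|u+v\|$ does not telescope to a ratio of the form $\|a-p_{1-i}\|/\text{length}$ the way $\|u+v\|=\|b'-p_i\|/r$ does at $a$; you would need extra estimates just to verify the hypothesis of Lemma~\ref{lemma: almost straight}. And even granting near-alignment at $b'$, the extension of $\cBall(b',r)$ past $a$ you would obtain is in the direction $p_{1-i}-b'$, not in the direction $a-p_i$ where $b$ actually lies, so yet another transfer step would be needed. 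Your approximate relation $b-b'\approx-(\delta/r)(b'-a)$ is also off; the correct one is $b-b'\approx-(1-\delta/r)(b'-a)$, of norm about $r-\delta$, which is precisely what the computation at $a$ makes rigorous. (Minor: the separate $\delta'>0$ argument via \eqref{equation: equivalence} is redundant once $\delta'\ge\delta>0$ is shown.) So while the scaffolding is right, the crucial geometric step is missing, and the route you gesture at for it would need to be redone.
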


\begin{proof}
Let $\varepsilon := \min \{h \alpha, h \beta / 3\}$, 
where $\alpha$ and $\beta$ are as in Lemma~\ref{lemma: almost straight}. 
Let $b$ be as assumed. 
By the definition of $a := a (b)$, 
there is $b' \in S _{1 - i}$ 
with $\lVert b' - a \rVert = \lVert a  - p _i \rVert$. 
We show that this $b'$ qualifies. 
Since $
 s' 
=
 \lVert b' - p _{1 - i} \rVert 
\leq
 \lVert b' - a \rVert
=
 \lVert a - p _{1 - i} \rVert
=
 s - \delta
$, 
it suffices to prove that $\delta' \geq \delta$
(which would then imply $b' \notin R _{1 - i}$). 

By Lemma~\ref{lemma: 3 l}, we have 
\begin{align*}
  \lVert b' - p _i \rVert 
&
 \geq
  \lVert p _{1 - i} - p _i \rVert - \lVert p _{1 - i} - b' \rVert 
 =
  3 h - s'
 >
  3 h - s
 \geq
  3 h - (h + \varepsilon) 
\notag
\\
&
 =
  2 (h + \varepsilon) - 3 \varepsilon
 \geq
  2 (h + \varepsilon) - \beta h
 >
  (h + \varepsilon) (2 - \beta)
 > 
  \lVert a - p _i \rVert (2 - \beta). 
\end{align*}
By this and $\lVert b' - a \rVert = \lVert a - p _i \rVert$, 
Lemma~\ref{lemma: almost straight} yields
$\lVert (b' - a) - \alpha (a - p _i) \rVert \leq \lVert a - p _i \rVert$. 
This remains true if we decrease $\alpha$, 
since $\cBall (0, \lVert a - p _i \rVert)$ is convex. 
So we replace $\alpha$ by $
 \lVert b - a \rVert / \lVert a - p _i \rVert
\leq
 \varepsilon / h
\leq
 \alpha
$, obtaining $
  \lVert b' - b \rVert
 =
  \lVert (b' - a) - (b - a) \rVert
 \leq
  \lVert a - p _i \rVert
$. 

Since $b$ is in $S _i$
and $a' := a (b')$ is in $R _{1 - i}$, 
we have $\lVert a' - b \rVert \geq s$. 
Hence, $
  \delta' 
 =
  \lVert b' - a' \rVert
 \geq
  \lVert a' - b \rVert - \lVert b' - b \rVert
 \geq
  s - \lVert a - p _i \rVert
 =
  \delta
$, as desired. 
\end{proof}

The rest of the argument is similar to what we already saw in Section~\ref{s:euclcase}
(and even simpler because we do not have case~\ref{enumi: length decreases} this time):
Starting at $b _0 \in S \setminus R$ such that $s (b _0) < h + \varepsilon$, 
where $\varepsilon$ is as in Lemma~\ref{lemma: gap increases (smooth)}, 
we define $b _{t + 1}$, 
for each $t \in \Nset$, 
to be the point $b'$ corresponding to $b = b _t$. 
By the lemma, 
$s (b _t)$ always decreases by at least $\delta (b _0)$, 
leading to a contradiction. 
This proves Theorem~\ref{theorem: smooth two singleton plane}. 

\section{Proofs of Theorem~\ref{t:doublezone}}
\label{appendix section: double zone diagrams}

There are two proofs of Theorem~\ref{t:doublezone}
available; we sketch the main ideas for the reader's
convenience.

The \emph{first proof}, from \cite{asano07:_zone_diagr},
doesn't establish the theorem in full generality---it
works only for closed and disjoint
sites in a Euclidean space, or more generally,
in a finite-dimensional normed space with a rotund norm.
In this proof, 
we build a sequence of inner approximations to $\RR$ and outer
approximations to $\SS$. Namely, we set $\RR^{(0)}:=\PP$,
$\SS^{(0)}:=\Dom \RR^{(0)}$ (this is the classical Voronoi diagram
of the sites $P_1,\ldots,P_n$), and for $k=1,2,\ldots$ we put
$\RR^{(k)}:=\Dom\SS^{(k-1)}$, $\SS^{(k)}:=\Dom\RR^{(k-1)}$.

Antimonotonicity of $\Dom$ and induction yield
$\RR^{(0)}\preceq \RR^{(1)}\preceq \RR^{(2)}\preceq\cdots$
and $\SS^{(0)}\succeq \SS^{(1)}\succeq \SS^{(2)}\succeq\cdots$,
as well as $\RR^{(k)}\preceq \SS^{(k)}$ for all $k$.
We then define $\RR$ and $\SS$ by
\begin{align*}
R_i & := \closure{\bigcup_{k=0}^\infty R_i^{(k)}}, &
S_i & := \bigcap_{k=0}^\infty S_i^{(k)}.
\end{align*}
It remains to show that $\RR$ and $\SS$ are as required.
This is done in \cite{asano07:_zone_diagr} for the case
of point sites in $\R^2$ with the Euclidean norm.
By inspecting the proof (Lemma~5.1 of \cite{asano07:_zone_diagr}),
we see that it uses only the following property of the underlying metric
space (stated there as Lemma~3.1):  \emph{If $P$ is a closed
set, $X_1\supseteq X_2\supseteq \cdots$ is a decreasing sequence
of closed sets with $X_1\cap P=\emptyset$, 
and $X:=\bigcap_{k=1}^\infty X_k$, then $\dom(P,X)
\subseteq \closure{\bigcup_{k=1}^\infty \dom(P,X_k)}$.}
(Moreover, in the proof one also needs that $P_i\cap S^{(0)}_j=\emptyset$
for $i\ne j$; since we assume the sites to be closed and 
disjoint, this property of the Voronoi diagram is immediate.)

To verify the above statement, we can again follow the proof
of Lemma~3.1 in \cite{asano07:_zone_diagr}. First we check
that with the $X_k$ as above and any point $y$, 
we have $\dist(y,X)=\lim_{k\to\infty} \dist(y,X_k)$;
this follows easily assuming compactness of all closed balls 
in a finite-dimensional normed space. Now let us fix $x\in\dom(P,X)$
arbitrarily (we may assume $x\not\in P$, since
the case $x\in P$ is clear) and choose $\eps>0$; we want to show that
$\dist(x,\dom(P,X_k))\le\eps$ for some $k$. We let $p$ be
a point of $P$ nearest to $x$, and choose a point
$y\ne x$ on the segment $px$ at distance smaller than $\eps$ from~$x$.
It is easy to check, using the rotundity of the norm, that
$\dist(y,p)<\dist(y,X)$, and thus $\dist(y,p)\le\dist(y,X_k)$
for $k$ sufficiently large. So $y\in\dom(P,X_k)$ and we are done.

\medskip

The \emph{second proof} of Theorem~\ref{t:doublezone}, due to Reem and Reich
\cite{reem07:_zone_and_doubl_zone_diagr}, is based on the following
theorem of Knaster and Tarski (see
\cite{tarski55:_lattic_theor_fixpoin_theor_and_its_applic}):
{\sl If $\mathcal L=(L,\preceq)$ is a complete lattice
and $g \colon \mathcal L \to \mathcal L$ is a monotone mapping,
then $g$ has at least one fixed point (i.e.,
$x\in L$ with $g(x)=x$), and moreover, there
exists a smallest fixed point $x_0$ and a largest fixed point $x_1$, i.e.,
such that $x_0\preceq x\preceq x_1$
for every fixed point $x$.} 
To prove Theorem~\ref{t:doublezone},
we let $L$ be the system of all ordered $n$-tuples $\DD$
such that $P _i \subseteq D _i$ for every $i$. 
We introduce the ordering $\preceq$ as above 
(one has to check that this gives
a complete lattice, which is straightforward). 
Let $g:=\Dom^2$; that is, $
g (\DD) := \Dom (\Dom \DD)
$.  Then we let $\RR$ be the smallest fixed point of $g$
as in the Knaster--Tarski theorem, and $\SS:=\Dom\RR$. 
Clearly $\Dom\SS=\Dom^2 \RR=g(\RR)=\RR$. 
Moreover, if $\RR',\SS'$ satisfy
$\RR'=\Dom \SS'$ and $\SS'=\Dom\RR'$, then
$\RR'$ and $\SS'$ are both fixed points of $\Dom^2$,
and thus $\RR\preceq\RR',\SS'\preceq \SS$ as claimed.

\section{Proof of Lemma~\ref{l:(1)norm}}
\label{appendix section: inflated norm lemma}

The construction has two positive parameters, $\alpha$ and
$\delta$, where $\alpha$ is small and $\delta$ is still much smaller.

We let $\| \mathord\cdot \|'$ be the Euclidean norm scaled by $\alpha$ in the horizontal
direction; that is, $\|(x,y)\|'=\sqrt{\alpha^2 x^2+y^2}$. 
Let $\| \mathord\cdot \|''$ be the $\ell_1$ norm scaled by a suitable factor
$\beta$ (close to $1$) in the vertical direction:
$\|(x,y)\|''=|x|+\beta|y|$. The norm $\| \mathord\cdot \|_{(1)}$ is obtained
as  $a'\| \mathord\cdot \|'+a''\| \mathord\cdot \|''$, where $a',a''>0$ are suitable
 coefficients.
This obviously yields a norm,
which is rotund since $\| \mathord\cdot \|'$ is rotund. 

We want that the contribution of $\| \mathord\cdot \|'$ is small compared
to that of $\| \mathord\cdot \|''$, and that 
the corners of the unit ball of $\| \mathord\cdot \|_{(1)}$ 
coincide with those of the $\ell_1$ unit ball.
This finally leads to the formula
$$
\|(x,y)\|_{(1)}:= \delta\sqrt{\alpha^2 x^2+y^2}+(1-\alpha\delta)|x|+(1-\delta)|y|.
$$
Fig.~\ref{f:bulgefat} is actually obtained from this formula
with $\delta=0.7$ and $\alpha=0.5$. It is easy to check that, as
the picture suggests, $\| \mathord\cdot \|_{(1)}\le \| \mathord\cdot \|_1$
(and thus the $\ell_1$ unit ball is contained in the
$\| \mathord\cdot \|_{(1)}$ unit ball), and 
for $\delta$ is sufficiently small in terms of $\alpha$ and $\eps$,
the unit ball of $\| \mathord\cdot \|_{(1)}$ is contained in the octagon as in the lemma.

It remains to investigate the bisector of $p$ and $q$ for $x\ge1$ and $y\ge 1$.
For convenience, we translate $p$ and $q$ by $(-1,-1)$ and scale by
$\frac 12$. Then the bisector 
 is given by the equation $\|(x+1,y)\|_{(1)}=\|(x,y+1)\|_{(1)}$,
with the region of interest being the positive quadrant $x,y\ge 0$.
For $x,y\ge 0$, the absolute values can be removed, $\delta$ disappears
from the equation, and we obtain
$\sqrt{\alpha^2 (x+1)^2+y^2}+1-\alpha=\sqrt{\alpha^2 x^2+(y+1)^2}
$. This can be solved for $y$ explicitly, with the only positive root
$$
y=\frac {1-\alpha}{2-\alpha}\left( \sqrt{1+2\alpha x+2\alpha x^2}-1
+\frac{\alpha}{1-\alpha} x\right).
$$
This is the equation of the bisector curve in the positive quadrant.
It is a simple exercise in calculus (distinguishing the cases 
$\alpha x\le 1$ and $\alpha x>1$, say) to show that
$y\le C\sqrt\alpha\, x$ for all $x>0$ and all sufficiently small $\alpha$ (here $C$ is a suitable constant).

\end{document}